\theoremstyle{plain}
\newtheorem{theorem}{Theorem}[section]
\newtheorem{proposition}[theorem]{Proposition}
\theoremstyle{definition}
\newtheorem{definition}[theorem]{Definition}
\theoremstyle{remark}
\begin{document}


\title{Regret theory under fear of the unknown}

\author{
\name{Fang Liu 
\thanks{Corresponding author. f\_liu@gxu.edu.cn; fang272@126.com}}
\affil{
School of Mathematics and Information Science, Guangxi University, \\Nanning Guangxi 530004, China;
}
}

\maketitle

\begin{abstract}
It is common to encounter the situation with uncertainty for decision makers (DMs) in dealing with a complex decision making problem. The existing evidence shows that people usually fear the extreme uncertainty named as the unknown. This paper reports the modified version of the typical regret theory by considering the fear experienced by DMs for the unknown.
Based on the responses of undergraduate students to the hypothetical choice problems with an unknown outcome, some experimental evidences are observed and analyzed. The framework of the modified regret theory is established by considering the effects of an unknown outcome. A fear function is equipped and some implications are proved. The behavioral foundation of the modified regret theory is further developed by modifying the axiomatic properties of the existing one as those based on the utility function; and it is recalled as the utility-based behavioral foundation for convenience. The application to the medical decision making with an unknown risk is studied and the effects of the fear function are investigated. The observations reveal that the existence of an unknown outcome could enhance, impede or reverse the preference relation of people in a choice problem, which can be predicted by the developed regret theory.
\end{abstract}

\begin{keywords}
Regret theory, Fear of the unknown, Experimental evidence, Fear function, Behavioral foundation
\end{keywords}

\section{Introduction}\label{sec:intro}

Regret theory offers an effective framework for choosing two prospects with state-contingent consequences by capturing the experienced emotions of DMs in the past \citep{Bell1982,Loomes1982,Diecidue2017}. It seems common to experience the negative sensation of regret when a foregone prospect exhibits a better outcome than the chosen one.
The psychology of regret aversion plays a great role in a future decision making, which has been verified according to the psychological content and neurobiological evidence \citep{Camille2004,Bour2010}. It is worth noting that the theoretical and practical investigations related to regret theory have attracted a great deal of attention \citep{Loomes1987,Fishburn1989,Quiggin1994,Bleichrodt2010,Bleichrodt2015,Diecidue2017}. The recent findings have shown that the typical regret theory is based on the solid measurement theory and the behavioral foundation \citep{Bleichrodt2010,Diecidue2017}.

It is convenient to recall the formalization process of the typical regret theory. A finite state space is assumed as $S=\{s_{1}, s_{2}, \cdots, s_{n}\}$ by following the existing ideas \citep{Loomes1982,Diecidue2017}.
The probability of each state $s_{i}$ is written as $p_{i}\in[0,1]$ with
$
\sum_{i=1}^{n}p_{i}=1.
$
When a real-valued outcome $x_{i}$ is considered for the occurrence of the state $s_{i}$  $(i\in I_{n}=\{1,2,\cdots,n\}),$ a prospect can be determined as $X=(x_{1}, p_{1}; x_{2}, p_{2}; \cdots; x_{n}, p_{n})$ \citep{Kahneman1979}.
The next important issue is how to choose between two independent prospects such as $f=(f_{1}, p_{1}; f_{2}, p_{2}; \cdots; f_{n}, p_{n})$ and
$g=(g_{1}, p_{1}; g_{2}, p_{2}; \cdots; g_{n}, p_{n}).$ We further suppose that for any individual, there is a choiceless utility function $u(\cdot)$ by following the utility theory \citep{Neumann1953,Fishburn1970}, which is continuous and strictly increasing. Under the framework of regret theory, when choosing $f$ and rejecting $g,$ the DM should possess the utility $u(f_{i})$ and lose the utility $u(g_{i})$ with the occurrence of the $i$th state.
Simultaneously, it is considered that the final utility of the DM could be affected by the difference between $u(f_{i})$ and $u(g_{i}).$ When $u(f_{i})>u(g_{i}),$ the final utility could be bigger than $u(f_{i})$ due to the experienced rejoicing.
When $u(f_{i})<u(g_{i}),$ the final utility could be less than $u(f_{i})$ due to the experienced regret. Hence, a strictly increasing function $R(\cdot)$ with three times differentiable is assumed under the condition $R(0)=0.$ For the case of choosing $f$ and rejecting $g,$ the modified utility of the DM corresponding to the occurrence of the $i$th state is expressed as:
\begin{equation}
F_{i}(f,g)=u(f_{i})+R(u(f_{i})-u(g_{i})),\quad i\in I_{n}.\label{req1}
\end{equation}
Similar to (\ref{req1}), when choosing $g$ and rejecting $f,$ the final utility of the DM with the occurrence of the $i$th state can be written as:
\begin{equation}
F_{i}(g,f)=u(g_{i})+R(u(g_{i})-u(f_{i})),\quad i\in I_{n}.\label{req1a}
\end{equation}
In addition, by considering the following relation:
\begin{equation}
F_{i}(f,g)-F_{i}(g,f)=u(f_{i})-u(g_{i})+R(u(f_{i})-u(g_{i}))-R(u(g_{i})-u(f_{i})),
\end{equation}
for $i\in I_{n},$ the regret function $Q(\cdot)$ is defined such that $Q(\xi)=\xi+R(\xi)-R(-\xi)$ satisfying $Q(\xi)=-Q(-\xi).$ Finally, the choice between $f$ and $g$ is based on the following rule:
\begin{equation}
f\succeq g\Leftrightarrow \sum_{i=1}^{n}p_{i}Q(u(f_{i})-u(g_{i}))\geq0,\label{req3}
\end{equation}
where the symbol $\succeq$ denotes the commonly used weak preference relation. It is seen from (\ref{req3}) that the regret function $Q(\cdot)$ together with $R(\cdot)$ play an important role in the regret theory. A convex function $Q(\cdot)$ corresponds to the case with regret aversion and a linear one means the degenerative case without sensitivity to regret.

It is found from the formalization process of the typical regret theory, the outcomes $x_{i}$ $(i\in I_{n})$ are always assumed to be known in the prospect $X=(x_{1}, p_{1}; x_{2}, p_{2}; \cdots; x_{n}, p_{n}).$ However, when dealing with a complex decision making problem, one could face the situation that some of the outcomes corresponding to the states are unknown.
For example, when a woman decides to get pregnant, many aspects of pregnancy could be unknown and out of the control \citep{Jones2005}. The unknown outcomes could play a different role in the final utility of the DM as compared to the known ones. In other words, when there is an unknown outcome for a state, the utility function $u(\cdot)$ should be reconstructed. The underlying reason is attributed to the fact that people have a natural fear of the unknown, which has been illustrated carefully in the application and review papers \citep{Cao2011,Carleton2016}.
It is clear that the effects of fear of the unknown on the utility of the DM have not been considered in the typical regret theory.
In order to deal with the important situation, we carry out the experimental study within the undergraduate students by a series of hypothetical choice problems in this study.
A comparative analysis is made for the cases with and without an unknown outcome for a state, respectively. A modified choice rule is further proposed by introducing the fear function and a novel behavioral foundation is established.

The structure of the paper is organized as follows. In Section 2, the experimental evidences are presented to reveal that the utility function $u(\cdot)$ should be reconsidered due to the existence of the unknown outcome for a state. By comparing the responses of university students about the hypothetical choice problems with and without an unknown outcome, the typical evidences for regret theory are observed again. It is further found that the existence of an unknown outcome could enhance, impede or reverse the preference relation under the case without an unknown outcome. The new experimental evidences under the case of an unknown outcome show that the typical regret theory should be improved. Section 3 offers the formalization process of the modified regret theory by considering the new experimental evidences. The effect of an unknown outcome on the utility function is incorporated into the modified regret theory by proposing a fear function. The implications of the modified regret theory are further investigated. One can find that the novel experimental phenomenon of choosing prospects can be predicted by using the developed regret theory.
Section 4 reports a theoretical analysis for the behavioral foundation of the modified regret theory. Following the idea in \cite{Diecidue2017}, it is observed that the behavioral foundation of the modified regret theory can be achieved by adjusting the construction of the utility function in the typical regret theory and forming the axioms according to the modified utility function. In Section 5, the medical decision problem is analyzed by considering the existence of an unknown outcome. The combined effects of the fear functions and the probability of the unknown outcomes are discussed. The predicted results in terms of the modified regret theory show that an unknown risk existing in a medical decision making could affect greatly the choice of people. The main conclusions are covered in Section 6.

\section{Experimental evidences}

\begin{table}[ptb]
 \caption{Responses to hypothetical choice problems in \cite{Kahneman1979} and the present study.}
\resizebox{\textwidth}{!}{
\begin{tabular}
[c]{c|c|c|c|c|c}
 \hline
Cases &\multicolumn{2}{c|}{\underline{The offered prospects$^{\dag}$}}&Modal preference&PSMP$^{*}$(KT)&PSMP$^{*}$(P)\\
&$f$&$g$&&\\
 \hline
1&(2500, 0.33; 2400, 0.66; 0, 0.01)&(2400, 1.00)&$f\prec g$&$82\%$&$52\%$\\
2&(2500, 0.33; 0, 0.67)&(2400, 0.34; 0, 0.66)&$f\succ g$&$83\%$&$67\%$\\
3&(4000, 0.8; 0, 0.2)&(3000, 1.00)&$f\prec g$&$80\%$&$56\%$\\
4&(--4000, 0.8; 0, 0.2)&(--3000, 1.00)&$f\succ g$&$92\%$&$63\%$\\
5&(4000, 0.2; 0, 0.8)&(3000, 0.25; 0, 0.75)&$f\succ g$&$65\%$&$70\%$\\
6&(--4000, 0.2; 0, 0.8)&(--3000, 0.25; 0, 0.75)&$f\prec g$&$58\%$&$68\%$\\
7&(6000, 0.45; 0, 0.55)&(3000, 0.90; 0, 0.10)&$f\prec g$&$86\%$&$81\%$\\
8&(6000, 0.001; 0, 0.999)&(3000, 0.002; 0, 0.998)&$f\succ g$&$73\%$&$79\%$\\
9&(5000, 0.001; 0, 0.999)&(5, 1.00)&$f\succ g$&$72\%$&$53\%$\\
10&(--5000, 0.001; 0, 0.999)&(--5, 1.00)&$f\prec g$&$83\%$&$56\%$\\
\hline
\end{tabular}}\\
$\dag$ The wealth measured in Israeli Pounds or RMB.
$*$ Statistically significant at the 0.01 level.
\label{lab1a}
\end{table}

In the following, we analyze the responses of university students to some hypothetical choice problems.
First, it is of much interest to recall the existing experimental results shown in Table \ref{lab1a}, where the term of PSMP denotes the percentage of subjects with modal preference \citep{Kahneman1979,Loomes1982}. It is seen that the wealth is measured in Israeli Pounds and the statistically significant level is 0.01. The amounts of the offered outcomes are significant since the median net monthly income for an Israeli family is about 3,000 Israeli pounds in 1970s \citep{Kahneman1979}.
In the present study, the same choice problems in Table \ref{lab1a} were offered to the undergraduate students at Guangxi University in China by replacing Israeli Pounds with RMB. It should be pointed out that the median monthly
cost of living for an undergraduate student at Guangxi University is about 1500 RMB. The involved amounts of wealth in Table \ref{lab1a} are still significant to the undergraduate students at Guangxi University. The participants were asked to carefully consider the choice between the two prospects $f$ and $g$ by imagining the actual situations.
In the instructions, it was specified that there was no "correct" answer and the responses should follow their hearts by considering the existing risk.
It is found that the modal preferences are identical to those in Table \ref{lab1a} according to the received 109 responses. The difference is the percentage of subjects with modal preference and the observed results are given in the term of PSMP$^{*}$(P). Comparing the numbers in PSMP$^{*}$(KT) and PSMP$^{*}$(P), it seems that the present participants are more adventurous than those in 1970s.

\begin{table}[ptb]
 \caption{Responses to hypothetical choice problems with an unknown outcome.}
\resizebox{\textwidth}{!}{
\begin{tabular}
[c]{c|c|c|c|c|c}
 \hline
Cases&\multicolumn{2}{c|}{\underline{The offered prospects$^{\dag}$}}&Modal&PSMP$^{*}$&Effect of\\
&$\bar{f}$&$\bar{g}$&preference&&the unknown\\
 \hline
1&(2500, 0.33; 2400, 0.66; unknown, 0.01)&(2400, 1.00)&$\bar{f}\prec \bar{g}$&$54\%$&Increasing\\
2&(2500, 0.33; unknown, 0.67)&(2400, 0.34; 0, 0.66)&$\bar{f}\prec \bar{g}$&$55\%$&Reversal\\
2$^\prime$&(2500, 0.33; 0, 0.67)&(2400, 0.34; unknown, 0.66)&$\bar{f}\succ \bar{g}$&$63\%$&Decreasing\\
3&(4000, 0.8; unknown, 0.2)&(3000, 1.00)&$\bar{f}\prec \bar{g}$&$58\%$&Increasing\\
4&(--4000, 0.8; unknown, 0.2)&(--3000, 1.00)&$\bar{f}\succ \bar{g}$&$66\%$&Increasing\\
5&(4000, 0.2; unknown, 0.8)&(3000, 0.25; 0, 0.75)&$\bar{f}\succ \bar{g}$&$56\%$&Decreasing\\
5$^\prime$&(4000, 0.2; 0, 0.8)&(3000, 0.25; unknown, 0.75)&$\bar{f}\prec \bar{g}$&$52\%$&Reversal\\
6&(--4000, 0.2; unknown, 0.8)&(--3000, 0.25; 0, 0.75)&$\bar{f}\prec \bar{g}$&$63\%$&Decreasing\\
6$^\prime$&(--4000, 0.2; 0, 0.8)&(--3000, 0.25; unknown, 0.75)&$\bar{f}\succ \bar{g}$&$54\%$&Reversal\\
7&(6000, 0.45; unknown, 0.55)&(3000, 0.90; unknown, 0.10)&$\bar{f}\prec \bar{g}$&$64\%$&Decreasing\\
8&(6000, 0.001; unknown, 0.999)&(3000, 0.002; unknown, 0.998)&$\bar{f}\succ \bar{g}$&$84\%$&Increasing\\
9&(5000, 0.001; unknown, 0.999)&(5, 1.00)&$\bar{f}\succ \bar{g}$&$51\%$&Decreasing\\
10&(--5000, 0.001; unknown, 0.999)&(--5, 1.00)&$\bar{f}\prec \bar{g}$&$67\%$&Increasing\\
\hline
\end{tabular}}\\
$\dag$ The wealth measured in RMB.
$*$ Statistically significant at the 0.01 level.
\label{lab2a}
\end{table}

Moreover, it is seen from Table \ref{lab1a} that the outcomes in the provided prospects are all known. The complexity of the practical problem may lead to the case that there is an unknown outcome.  The 109 participants were invited again to make a choice between the two prospects $\bar{f}$ and $\bar{g}$ with an unknown outcome. The effects of the unknown outcomes on the modal preference and the PSMP are reported in Table \ref{lab2a}.
In the following, let us compare the observations in Tables \ref{lab1a} and \ref{lab2a}, and the three situations are discussed.
\begin{itemize}
\item[$\bullet$] \emph{Preference reversal.}

The term ``Reversal'' in Table \ref{lab2a} means that the modal preference is reversed by comparing the corresponding case in Table \ref{lab1a}.
For example, the difference between the two prospects $f=(2500, 0.33; 0, 0.67)$ and $\bar{f}=(2500, 0.33; unknown, 0.67)$ is that the zero outcome is replaced by the unknown one. By comparing with $g=\bar{g}=(2400, 0.34; 0, 0.66),$ the modal preference of the participants is reversed from $f\succ g$ with the percentage $67\%$ to $\bar{f}\prec \bar{g}$ with the percentage $55\%.$
This means that the unknown outcome could change the preference relation of the DMs; and the changing trend is with the unknown averse. The above phenomenon is in agreement with the
behaviour of people to fear the unknown \citep{Carleton2016}.

\item[$\bullet$] \emph{Percentage increasing.}

It is seen from Table \ref{lab2a} that the term ``Increasing'' implies the increasing percentage of modal preference with the preference preservation.
As an illustration, we consider the cases of $3$ and $4$ in Tables \ref{lab1a} and \ref{lab2a}, respectively.
When the prospect $f=(4000, 0.8; 0, 0.2)$ in Table \ref{lab1a} is replaced by the prospect $\bar{f}=(4000, 0.8; unknown, 0.2)$ in Table \ref{lab2a}, the modal preference
$f\prec g$ is not changed and the percentage is increasing from $56\%$ to $58\%$ by comparing $g=\bar{g}=(3000, 1.0).$ The observation still reflects the unknown averse under the fear psychology of the unknown.
Moreover, when the prospect $f=(-4000, 0.8; 0, 0.2)$ is rewritten as $\bar{f}=(-4000, 0.8; unknown, 0.2),$ the modal preference $f\succ g$ holds with an increasing percentage
$66\%$ by comparing $g=\bar{g}=(-3000, 1.0).$ This means that people do not fear the unknown for this situation and exhibit the unknown loving.

\item[$\bullet$] \emph{Percentage decreasing.}

When the percentage of modal preference is decreasing and without the preference reversal, it is marked as ``Decreasing'' in Table \ref{lab2a}.
For instance, the cases $5$ and $6$ in Tables \ref{lab1a} and \ref{lab2a} are used to analyze the effects of the unknown outcome.
When replacing the prospect $f=(4000, 0.2; 0, 0.8)$ in Table \ref{lab1a} by $\bar{f}=(4000, 0.2; unknown, 0.8)$ and keeping $g=\bar{g}=(3000, 0.25; 0, 0.75),$ the percentage of $f\succ g$ decreases from $70\%$ to $56\%$ of $\bar{f}\succ \bar{g}.$
This phenomenon reveals the aversion of the unknown risk due to the existence of the unknown outcome. In addition,
when replacing the prospect $f=(-4000, 0.2; 0, 0.8)$ in Table \ref{lab1a} by $\bar{f}=(-4000, 0.2; unknown, 0.8)$ and letting $g=\bar{g}=(-3000, 0.25; 0, 0.75),$ there is a $5\%$ decrease for the percentage of $f\prec g$ from $68\%$ to $63\%$ of $\bar{f}\prec \bar{g}.$ This implies that people have some loving for the unknown outcome.

\end{itemize}

The above analysis shows that the existence of the unknown outcome has a great influence on the choice of the prospects. By comparing the several situations, it is found that the effects of the unknown outcome on the model preference are dependent on the probability of the occurrence of the unknown outcome and the signs of the known outcome. Hence, the regret theory is worth to be further developed by considering the effects of the unknown outcome.

\section{The modified regret theory and its implications}

Now we construct the theoretical framework of the modified regret theory by considering the existence of an unknown outcome. The basic notations and the choosing rule are proposed; then the experimental observations are illustrated by proving the implications.

\subsection{The framework of the modified regret theory}

For convenience, a finite state space $S=\{s_{1}, s_{2}, \cdots, s_{n}\}$ is still assumed
by following the basic ideas of regret theory \citep{Bell1982,Loomes1982}. We can further integrate all states with an unknown outcome as a particular one $s_{u}.$
Then the finite state space is assumed to be  $\bar{S}=\{s_{1}, s_{2}, \cdots, s_{m}; s_{u}\}$ with $1\leq m\leq n.$
The corresponding probability of each state $s_{i}$ is expressed as $p_{i}\in[0,1]$ for $i\in I_{m}=\{1,2,\cdots,m\}$ or $i=u$ with the following condition:
\begin{equation}
\sum_{i=1}^{m}p_{i}+p_{u}=1.
\end{equation}
One can see that when $p_{u}=0,$ the state space $\bar{S}$ degenerates to that in the typical regret theory. According to the state space $\bar{S},$
a prospect is expressed as $\bar{X}=(x_{1}, p_{1}; x_{2}, p_{2}; \cdots; x_{m}, p_{m}; x_{u}, p_{u}),$ where the symbol $x_{u}$ stands for the unknown outcome.
Similar to the consideration in \cite{Loomes1982}, it is assumed that there exists a choiceless utility function $u(\cdot)$ such that the obtained utility can be measured as a value of real numbers $u(x_{i})$ for the occurrence of the state $s_{i}$ $(i\in I_{m});$ and $u(x_{i})$ is continuous and strictly increasing. However, there is a natural difference between the determined outcome $x_{i}$ $(i\in I_{m})$ and the unknown one $x_{u}.$ The psychological experience of fear of the unknown is dependent on the probability of the occurrence of the unknown event. For example, when we face a completely unknown world, a great sense of fear could be experienced. When one encounters an unknown thing with zero probability, a relaxed feeling should be experienced. Therefore, it is supposed that there is a normalized fear function
 $v_{i}(p_{u})$ such that the following conditions are satisfied:
\begin{equation}
v_{i}(0)=1,\quad v_{i}(1)=0,\quad i\in I_{m}.\label{req6a}
\end{equation}
Moreover, the fear function $v(p_{u})$ can be also considered to be continuous and strictly decreasing.
With the existence of $v(p_{u}),$ the utility function $u(x_{i})$ should be modified and it is written as:
\begin{equation}
\bar{u}(x_{i})=v_{i}(p_{u})u(x_{i}).\label{req7a}
\end{equation}
This means that the existence of an unknown outcome has the influence on the utility of the determined outcomes.
The idea in the formula (\ref{req7a}) is similar to the consideration in \cite{Bell1981} that ``people will pay a premium to avoid uncertainty in financial returns."
In particular, we should answer the question that what is the utility value of the unknown outcome. It is noted that some existing works always focus on the utility value of an outcome with an imprecise probability \citep{Du2005}. For the unknown outcome, it is assumed that the utility value is given as:
\begin{equation}
\bar{u}(x_{u})=0.\label{eq8a}
\end{equation}
The assumption (\ref{eq8a}) is in accordance with the case with the condition of $v_{i}(1)=0$ in (\ref{req7a}). It means that when one is faced with a complete unknown situation, the utility value is zero due to the extreme fear.

In the following, let us compare any two independent prospects:
\begin{eqnarray}
&&\bar{f}=(f_{1}, p_{1}; f_{2}, p_{2}; \cdots; f_{n}, p_{n}), \nonumber\\
&&\bar{g}=(g_{1}, p_{1}; g_{2}, p_{2}; \cdots; g_{n}, p_{n}).\nonumber
\end{eqnarray}
By considering the existence of unknown outcomes, it is assumed that the sums of the probabilities corresponding to the unknown outcomes in $\bar{f}$ and $\bar{g}$ are $p_{fu}$ and $p_{gu},$ respectively.
By using the regret function $R(\cdot),$ when choosing $\bar{f}$ and rejecting $\bar{g},$ the utility value of the DM for the occurrence of the state $s_{i}$ is expressed as:
\begin{equation}
\bar{F}_{i}(\bar{f},\bar{g})=\bar{u}(f_{i})+R(\bar{u}(f_{i})-\bar{u}(g_{i})),\quad i\in I_{n}.\label{req8a}
\end{equation}
where
$$
\bar{u}(f_{i})=v_{i}(p_{fu})u(f_{i}),\quad \bar{u}(g_{i})=v_{i}(p_{gu})u(g_{i}).
$$
When choosing $\bar{g}$ and rejecting $\bar{f},$ we have
\begin{equation}
\bar{F}_{i}(\bar{g},\bar{f})=\bar{u}(g_{i})+R(\bar{u}(g_{i})-\bar{u}(f_{i})),\quad i\in I.\label{req9a}
\end{equation}
The regret function $Q(\cdot)$ is further defined as:
\begin{equation}
Q(\xi)=\xi+R(\xi)-R(-\xi).
\end{equation}
The choice mechanism of $\bar{f}$ and $\bar{g}$ is established as:
\begin{equation}
\bar{f}\succeq \bar{g}\Leftrightarrow \Psi=\sum_{i=1}^{n}p_{i}Q(\bar{u}(f_{i})-\bar{u}(g_{i}))\geq0.\label{req11a}
\end{equation}

Comparing with the typical choice rule of regret theory (\ref{req3}), the expression (\ref{req11a}) considers the effects of an unknown outcome according to the modified utility function (\ref{req7a}).
With the knowledge of (\ref{req6a}), when $p_{fu}=p_{gu}=0,$ the choice mechanism with (\ref{req11a}) degenerates to (\ref{req7a}). When $p_{fu}=p_{gu}=1,$ we have $\bar{f}\sim \bar{g}$ due to the zero utility with $Q(\bar{u}(f_{i})-\bar{u}(g_{i}))=0$ for $\bar{u}(f_{i})=\bar{u}(g_{i})=0$ $(i\in I_{n}).$

\subsection{The implications of the modified regret theory}

In what follows, we focus on the implications of the modified regret theory to the choices between two independent prospects with an unknown outcome.

(1) {\bf The ratio effect and its reverse}

  The following result can be predicted by using the modified regret theory:
  \begin{proposition}\label{pro1}
  Let $\bar{f}=(f_{1}, \lambda p; unknown, 1-\lambda p)$ and $\bar{g}=(g_{1}, p; 0, 1- p)$ be two independent prospects with $p\in(0,1]$ and $\lambda\in(0,1).$
  If there is a probability $p=\bar{p}$ such that $\bar{f}\sim \bar{g},$ then (I) $p>\bar{p}\Rightarrow \bar{f}\succ \bar{g}$ and $p<\bar{p}\Rightarrow \bar{f}\prec \bar{g}$
  for $f_{1}>g_{1}>0$ and $0<v_{1}(p_{u})u(f_{1})<u(g_{1});$ (II) $p>\bar{p}\Rightarrow \bar{f}\prec \bar{g}$ and $p<\bar{p}\Rightarrow \bar{f}\succ\bar{g}$
  for $f_{1}<g_{1}<0$ and $u(g_{1})<v_{1}(p_{u})u(f_{1})<0.$
  \end{proposition}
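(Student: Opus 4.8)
The plan is to reduce the choice functional $\Psi$ in (\ref{req11a}) to a single scalar expression whose sign is controlled by $p$, and then to exploit monotonicity. First I would embed $\bar{f}$ and $\bar{g}$ in a common state space by refining the two given partitions, juxtaposing the unknown event of $\bar{f}$ (of total probability $p_{fu}=1-\lambda p$) with the favourable event of $\bar{g}$, and the $f_1$-event of $\bar{f}$ with the zero-event of $\bar{g}$. Since $\bar{g}$ carries no unknown outcome we have $p_{gu}=0$ and $v_1(0)=1$ by (\ref{req6a}), so that $\bar{u}(g_1)=u(g_1)$, while $\bar{u}(f_1)=v_1(1-\lambda p)u(f_1)$ by (\ref{req7a}); the unknown contributes $\bar{u}(x_u)=0$ by (\ref{eq8a}) and the zero outcome contributes $u(0)=0$ under the usual normalization. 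This produces three event-types with modified-utility differences $v_1(1-\lambda p)u(f_1)$, $-u(g_1)$ and $0$, carrying probabilities $\lambda p$, $p$ and $1-(1+\lambda)p$ respectively.

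Substituting into (\ref{req11a}) and using that $Q$ is odd with $Q(0)=0$, the zero-difference state drops out and a factor $p>0$ can be extracted, giving $\Psi=p\,\Phi(p)$ with $\Phi(p)=\lambda\,Q\bigl(v_1(1-\lambda p)u(f_1)\bigr)-Q(u(g_1))$. The key observation is that the entire $p$-dependence of $\Phi$ sits inside $v_1(1-\lambda p)$. Because $v_1$ is continuous and strictly decreasing while $1-\lambda p$ is itself strictly decreasing in $p$, the composition $p\mapsto v_1(1-\lambda p)$ is strictly increasing; and $Q(\xi)=\xi+R(\xi)-R(-\xi)$ is strictly increasing since $R$ is strictly increasing. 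Hence the monotonicity of $\Phi$ is dictated by the sign of $u(f_1)$: under the hypotheses of (I), where $f_1>g_1>0$ forces $u(f_1)>0$, the factor $v_1(1-\lambda p)u(f_1)$ increases and $\Phi$ is strictly increasing; under (II), where $f_1<g_1<0$ forces $u(f_1)<0$, the same factor decreases and $\Phi$ is strictly decreasing. The standing inequalities $0<v_1(p_u)u(f_1)<u(g_1)$ and $u(g_1)<v_1(p_u)u(f_1)<0$ fix the relevant sign regime for the comparison.

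To finish, I would invoke the hypothesis that $\bar{f}\sim\bar{g}$ at $p=\bar{p}$, which by (\ref{req11a}) means $\Psi(\bar{p})=0$ and hence $\Phi(\bar{p})=0$. Strict monotonicity of $\Phi$ then transfers the sign of $p-\bar{p}$ directly to the sign of $\Phi(p)$, and therefore of $\Psi(p)=p\,\Phi(p)$: in case (I), $p>\bar{p}$ gives $\Psi>0$ so $\bar{f}\succ\bar{g}$ while $p<\bar{p}$ gives $\bar{f}\prec\bar{g}$, whereas in case (II) the reversed monotonicity flips both conclusions. This is exactly the ratio effect and its reverse asserted by the two items.

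The main obstacle is the first step rather than the monotonicity argument: making the common state space and the juxtaposition of the unknown outcome precise, so that the reduction to $\Psi=p\,\Phi(p)$ is justified and the only surviving $p$-dependence is the manifestly monotone factor $v_1(1-\lambda p)$. In particular the three-state form above is transparent while the residual probability $1-(1+\lambda)p$ stays nonnegative, i.e.\ for $p\le 1/(1+\lambda)$, and some care is needed to argue that the same qualitative conclusion persists toward the upper end of the range; I would either treat that regime by the analogous refinement or restrict attention to the interval in which the indifference probability $\bar{p}$ lies.
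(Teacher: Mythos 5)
There is a genuine gap, and it sits exactly where you predicted: in the reduction to a common state space. You juxtapose the unknown event of $\bar{f}$ with the $g_{1}$-event of $\bar{g}$ and the $f_{1}$-event with the zero-event, obtaining three event-types with probabilities $\lambda p$, $p$ and $1-(1+\lambda)p$. But the proposition stipulates that $\bar{f}$ and $\bar{g}$ are \emph{independent}, and under independence the joint state in which $f_{1}$ and $g_{1}$ occur together has probability $\lambda p^{2}>0$; no zero-overlap juxtaposition is consistent with that. In regret theory the functional $\Psi=\sum_{i}p_{i}Q(\bar{u}(f_{i})-\bar{u}(g_{i}))$ depends on the joint distribution, not only on the marginals (unless $Q$ is linear), so your coupling computes a different quantity. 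The paper instead builds the four-state product decision matrix with probabilities $p_{f}p_{g}$, $p_{f}(1-p_{g})$, $p_{g}(1-p_{f})$, $(1-p_{f})(1-p_{g})$, which yields, with $p_{f}=\lambda p$, $p_{g}=p$ and $p_{fu}=1-\lambda p$, the criterion $\Psi=p\{\lambda Q(v_{1}(p_{fu})u(f_{1}))-Q(u(g_{1}))+\lambda p\,[\,Q(v_{1}(p_{fu})u(f_{1})-u(g_{1}))-Q(v_{1}(p_{fu})u(f_{1}))+Q(u(g_{1}))\,]\}$. Your $\Phi(p)=\lambda Q(v_{1}(1-\lambda p)u(f_{1}))-Q(u(g_{1}))$ is precisely this expression with the cross term deleted.

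The deletion matters for two reasons. First, it is exactly the cross term whose sign control requires the convexity of $Q$ (regret aversion): the paper invokes $Q(v_{1}u(f_{1})-u(g_{1}))>Q(v_{1}u(f_{1}))-Q(u(g_{1}))$ under $0<v_{1}(p_{fu})u(f_{1})<u(g_{1})$, and the analogous reversed inequality in case (II). That your argument never needs convexity of $Q$ is a symptom of having changed the problem, since in classical regret theory the common ratio effect for independent prospects hinges on exactly this convexity, while your three-state coupling would predict the effect from the fear function alone. Second, your worry about the residual probability $1-(1+\lambda)p$ being nonnegative (forcing $p\le 1/(1+\lambda)$) is another symptom: under the product construction all four joint probabilities are automatically admissible for every $p\in(0,1]$, and no such restriction arises. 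Your downstream reasoning --- $v_{1}$ strictly decreasing composed with $1-\lambda p$ gives a strictly increasing $p\mapsto v_{1}(1-\lambda p)$, monotonicity of $Q$, and the sign transfer through $\Phi(\bar{p})=0$ --- is sound as far as it goes and coincides with the monotonicity core of the paper's proof; to repair the argument you must run it on the full four-state $\Psi$, using the convexity inequality to handle the bracketed term, as the paper does.
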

\begin{proof}
  In order to prove Proposition \ref{pro1}, we follow the idea in \cite{Loomes1982} to construct the decision matrix in Table \ref{tab:m} for the two independent prospects $\bar{f}=(f_{1}, p_{f}; unknown, 1-p_{f})$ and $\bar{g}=(g_{1}, p_{g}; 0, 1-p_{g}).$ It is seen from Table \ref{tab:m} that the outcomes for the states $s_{3}$ and $s_{4}$ are unknown when choosing $\bar{f}$ and rejecting $\bar{g}.$ In the proposed theoretical framework, the utility value is zero with respect to the unknowns, and the unknown outcomes have influences on the utility values with respect to $s_{1}$ and $s_{2}.$

\begin{table}[ptb]
\caption{The decision matrix for $\bar{f}=(f_{1}, p_{f}; unknown, 1-p_{f})$ and $\bar{g}=(g_{1}, p_{g}; 0, 1-p_{g}).$}
\tabcolsep 0.05in
\begin{tabular}
[c]{c|cccc}
 \hline
States &$s_{1}$&$s_{2}$&$s_{3}$&$s_{4}$\\
\hline
Probability&$p_{f}p_{g}$&$p_{f}(1-p_{g})$&$p_{g}(1-p_{f})$&$(1-p_{f})(1-p_{g})$\\
\hline
Choosing $\bar{f}$ and rejecting  $\bar{g}$ &$f_{1}$&$f_{1}$&unknown&unknown\\
Choosing $\bar{g}$ and rejecting $\bar{f}$ &$g_{1}$&0 &$g_{1}$&0 \\
\hline
\end{tabular}\label{tab:m}
\end{table}

  In addition, it is convenient to choose $u(0)=0$ \citep{Loomes1982}. According to (\ref{req11a}), we have
\begin{eqnarray}
\bar{f}\succeq \bar{g}&\Leftrightarrow &p_{f}Q(v_{1}(p_{fu})u(f_{1}))-p_{g}Q(u(g_{1}))+p_{f}p_{g}\left[Q(v_{1}(p_{fu})u(f_{1})-u(g_{1}))\right.\nonumber\\
&&-\left.Q(v_{1}(p_{fu})u(f_{1}))+Q(u(g_{1}))\right]\geq0,\label{req12a}
\end{eqnarray}
where $p_{fu}=1-p_{f}.$
Then by considering $p_{f}=\lambda p$ and $p_{g}=p,$ it follows:
\begin{eqnarray}
\bar{f}\succeq\bar{g}\Leftrightarrow \Psi(p, \lambda, p_{fu}, f_{1}, g_{1})\geq0,\label{req13a}
\end{eqnarray}
where
\begin{eqnarray}
\Psi(p, \lambda, p_{fu}, f_{1}, g_{1})&=&p\left\{\lambda Q(v_{1}(p_{fu})u(f_{1}))- Q(u(g_{1}))\right.\nonumber\\
&&+\lambda p[Q(v_{1}(p_{fu})u(f_{1})-u(g_{1}))\nonumber\\
&&-\left.Q(v_{1}(p_{fu})u(f_{1}))+Q(u(g_{1}))]\right\}.
\end{eqnarray}
Under the assumption of a convex function $Q(\xi)$ for $\forall \xi>0,$ when $f_{1}>g_{1}>0$ and $0<v_{1}(p_{fu})u(f_{1})<u(g_{1}),$ it gives the following inequality:
\begin{equation}
Q(v_{1}(p_{fu})u(f_{1})-u(g_{1}))>Q(v_{1}(p_{fu})u(f_{1}))-Q(u(g_{1})).
\end{equation}
Therefore, for the case of $p>\bar{p},$ we have $v_{1}(1-p)>v_{1}(1-\bar{p})$ and
$$Q(v_{1}(1-p)u(f_{1}))>Q(v_{1}(1-\bar{p})u(f_{1})).$$
The above conditions yield
$\Psi(p, \lambda, p_{fu}, f_{1}, g_{1})>0$
and $\bar{f}\succ \bar{g}.$ For the case of $0<p<\bar{p},$ one has the relations of $v_{1}(1-p)<v_{1}(1-\bar{p})$ and $$Q(v_{1}(1-p)u(f_{1}))<Q(v_{1}(1-\bar{p})u(f_{1})).$$
Then it follows
$\Psi(p, \lambda, p_{fu}, f_{1}, g_{1})<0$
and $\bar{f}\prec \bar{g}.$

The above proof procedure shows that the ratio effect is satisfied under the condition of $0<v_{1}(p_{fu})u(f_{1})<u(g_{1}).$ On the contrary, when $f_{1}<g_{1}<0$ and $u(g_{1})<v_{1}(p_{fu})u(f_{1})<0,$  the reverse ratio effect can be achieved, which is due to the following inequalities:
\begin{eqnarray}
&&Q(v_{1}(p_{fu})u(f_{1})-u(g_{1}))>Q(v_{1}(p_{fu})u(f_{1}))-Q(u(g_{1})),\nonumber\\
&&Q(v_{1}(1-p)u(f_{1}))>Q(v_{1}(1-\bar{p})u(f_{1})),\quad for \quad 0<p<\bar{p},\nonumber\\
&&Q(v_{1}(1-p)u(f_{1}))<Q(v_{1}(1-\bar{p})u(f_{1})),\quad for \quad p>\bar{p}.\nonumber
\end{eqnarray}
The proof is completed.
\end{proof}

Let us further analyze some experimental evidences in Table \ref{lab2a}. For example,
we choose $\bar{f}_{1}=(4000, 0.80; unknown, 0.20)$ and $\bar{g}_{1}=(3000, 1.0)$ with the result of $\bar{f}_{1}\prec \bar{g}_{1}.$ In terms of Proposition \ref{pro1}, we have $p=1.0,$ $p_{u}=0.2,$ and $\lambda=0.8,$ meaning that $\bar{p}>0.2$ under $0<v_{1}(p_{fu})u(4000)<u(3000).$ Moreover, when the case with $\bar{f}_{1}=(4000, 0.2; unknown, 0.8)$ and $\bar{g}_{1}=(3000,0.25; 0, 0.75)$ is considered,
one arrives at $\bar{f}_{1}\succ\bar{g}_{1},$ $p=0.25,$ $p_{u}=0.8$ and $\lambda=0.8,$ implying that $\bar{p}<0.25.$ In other words, when $0<v_{1}(p_{fu})u(4000)<u(3000),$
there exists a probability $0.2<\bar{p}<0.25$ such that the experimental observations can be predicted by using the modified regret theory.
As compared to the common ratio effect and its inverse shown in \cite{Loomes1982}, the novelty is the introduction of the factor $v_{1}(p_{u})$ in Proposition \ref{pro1}. The fear of the unknown yields the reverse relation of $0<v_{1}(p_{fu})u(f_{1})<u(g_{1})$ for $f_{1}>g_{1}>0,$ or $0>v_{1}(p_{fu})u(f_{1})>u(g_{1})$ for $f_{1}<g_{1}<0.$

(2) {\bf The preference reversal effect}

  The existence of the unknown outcome may lead to the preference reversal of the choice; and the following prediction can be reached.

  \begin{proposition}\label{pro2}
  Let $f=(f_{1}, \lambda p; 0, 1-\lambda p)$ and $g=(g_{1}, p; 0, 1-p)$ be two independent prospects with $p\in(0,1],$  $\lambda\in(0,1)$ and $p_{fu}=1-\lambda p.$
  (I) Under $f\succ g$ for $f_{1}>g_{1}>0,$ if $f=(f_{1}, \lambda p; 0, 1-\lambda p)$ is changed to $\bar{f}=(f_{1}, \lambda p; unknown, 1-\lambda p)$ and the condition of $0< v_{1}(p_{fu})u(f_{1})<u(g_{1})$ is satisfied, then there is a sufficiently small $p>0$ such that the preference relation $\bar{f}\prec\bar{g}$ holds.
  (II) Under $f\prec g$ for $f_{1}<g_{1}<0,$ if $f=(f_{1}, \lambda p; 0, 1-\lambda p)$ is changed to $\bar{f}=(f_{1}, \lambda p; unknown, 1-\lambda p)$ and the condition of $ u(g_{1})<v_{1}(p_{fu})u(f_{1})<0$ is satisfied, then there is a sufficiently small $p>0$ such that the preference relation $\bar{f}\succ\bar{g}$ holds.
  \end{proposition}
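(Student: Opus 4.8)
The plan is to recycle the decision matrix of Table~\ref{tab:m} without change, because the prospects here have exactly the shape treated in Proposition~\ref{pro1}: $\bar{f}=(f_{1}, \lambda p; unknown, 1-\lambda p)$ against $\bar{g}=g=(g_{1}, p; 0, 1-p)$. Since $\bar{g}$ carries no unknown outcome, its fear factor is trivial, $p_{gu}=0$ and $v_{1}(p_{gu})=v_{1}(0)=1$, so $\bar{u}(g_{1})=u(g_{1})$; meanwhile $p_{fu}=1-\lambda p$ and $\bar{u}(f_{1})=v_{1}(1-\lambda p)u(f_{1})$. Feeding $u(0)=0$, $\bar{u}(x_{u})=0$ and the oddness $Q(-\xi)=-Q(\xi)$ into (\ref{req11a}) reproduces the expression already derived for Proposition~\ref{pro1}, so that $\bar{f}\succeq\bar{g}\Leftrightarrow\bar{\Psi}\geq0$ with
\begin{eqnarray}
\bar{\Psi}&=&p\left\{\lambda Q(v_{1}(1-\lambda p)u(f_{1}))-Q(u(g_{1}))\right.\nonumber\\
&&\left.+\lambda p\left[Q(v_{1}(1-\lambda p)u(f_{1})-u(g_{1}))-Q(v_{1}(1-\lambda p)u(f_{1}))+Q(u(g_{1}))\right]\right\}.\nonumber
\end{eqnarray}
As only the sign of $\bar{\Psi}$ and $p>0$ matter, I would divide by $p$ and examine the regime $p\to0^{+}$.

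The crux is the collapse of the fear factor. As $p\to0^{+}$ the mass on the unknown state swells to its maximum, $p_{fu}=1-\lambda p\to1$, and the normalization $v_{1}(1)=0$ in (\ref{req6a}) together with the continuity of $v_{1}$ forces $v_{1}(1-\lambda p)\to0$; hence $v_{1}(1-\lambda p)u(f_{1})\to0$ and, since $Q$ is continuous with $Q(0)=0$, also $Q(v_{1}(1-\lambda p)u(f_{1}))\to0$. The bracketed correction is weighted by $\lambda p$ and therefore drops out, leaving
\[\lim_{p\to0^{+}}\frac{\bar{\Psi}}{p}=-Q(u(g_{1})).\]
For part (I), $g_{1}>0$ yields $u(g_{1})>u(0)=0$, and as $Q$ is strictly increasing with $Q(0)=0$ we get $Q(u(g_{1}))>0$, so the limit is strictly negative. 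By continuity of $\bar{\Psi}/p$ in $p$ there is a $\delta>0$ with $\bar{\Psi}<0$ on $(0,\delta)$, i.e. $\bar{f}\prec\bar{g}$. Part (II) is the mirror image: $g_{1}<0$ gives $u(g_{1})<0$ and $Q(u(g_{1}))<0$, so the same limit equals $-Q(u(g_{1}))>0$, whence $\bar{\Psi}>0$ for all sufficiently small $p$ and $\bar{f}\succ\bar{g}$.

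The obstacle is conceptual rather than computational: one must see that the reversal is produced solely by $v_{1}(1)=0$ driving $\bar{u}(f_{1})$ to zero, so that the favourable outcome $f_{1}$ of $\bar{f}$ stops contributing while the lost utility $Q(u(g_{1}))$ carried by $\bar{g}$ survives. To make the word ``reversal'' rigorous I would record the parallel computation for the unknown-free pair, where $\bar{u}(f_{1})$ is replaced by the constant $u(f_{1})$ and the analogous value $\Psi_{0}$ satisfies $\lim_{p\to0^{+}}\Psi_{0}/p=\lambda Q(u(f_{1}))-Q(u(g_{1}))$; the hypothesis $f\succ g$ for small $p$ is exactly the positivity of this quantity, which the fear factor overturns. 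I would finally remark that the stated side condition $0<v_{1}(p_{fu})u(f_{1})<u(g_{1})$ (respectively $u(g_{1})<v_{1}(p_{fu})u(f_{1})<0$) requires no separate checking, since $v_{1}(1-\lambda p)u(f_{1})\to0$ while $u(g_{1})$ stays fixed and nonzero makes it hold automatically once $p$ is small, so it does not constrain the limiting argument.
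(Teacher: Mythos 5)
Your proof is correct and takes essentially the same route as the paper's: in both, the decisive step is the limit $p\to0^{+}$, where $p_{fu}=1-\lambda p\to1$ forces $v_{1}(p_{fu})\to0$ by continuity and $v_{1}(1)=0$, so that $\Psi/p\to-Q(u(g_{1}))$, whose sign settles parts (I) and (II). The only differences are cosmetic: you omit the paper's auxiliary convexity comparison $\Psi-\Phi<0$, which the paper uses only to narrate that fear depresses $\Psi$ and not to reach the conclusion, and your division by $p$ followed by a continuity argument is in fact slightly tighter than the paper's informal passage $\Psi\to-pQ(u(g_{1}))<0$.
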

\begin{proof}

(I) Based on the typical regret theory, the condition of $f\succ g$ for $f_{1}>g_{1}>0$ leads to $\Phi(p, \lambda, f_{1}, g_{1})>0$ where
\begin{eqnarray}
 \Phi(p, \lambda, f_{1}, g_{1})&=& p\left\{\lambda Q(u(f_{1}))- Q(u(g_{1}))\right.\nonumber\\
 &&+\left.\lambda p\left[Q(u(f_{1})-u(g_{1}))-Q(u(f_{1}))+Q(u(g_{1}))\right]\right\},\label{eqf16}
\end{eqnarray}
with $p<\bar{p}$ and $\Phi(\bar{p}, \lambda, f_{1}, g_{1})=0.$
When there is an unknown outcome, we follow the proof procedure of Proposition \ref{pro1} and give the choice rule in (\ref{req13a}).
Moreover, since the function $Q(\xi)$ is convex and $Q(-\xi)=-Q(\xi)$ for $\xi>0,$ the application of $0<v_{1}(p_{fu})u(f_{1})<u(g_{1})$ and $f_{1}>g_{1}>0$ leads to
\begin{eqnarray}
&&Q(v_{1}(p_{fu})u(f_{1})-u(g_{1}))>Q(v_{1}(p_{fu})u(f_{1}))-Q(u(g_{1})),\nonumber\\
&&Q(u(f_{1})-u(g_{1}))<Q(u(f_{1}))-Q(u(g_{1})).\nonumber
\end{eqnarray}

The above two inequalities mean the following relation:
 \begin{eqnarray}
&&Q(v_{1}(p_{fu})u(f_{1})-u(g_{1}))-Q(v_{1}(p_{fu})u(f_{1}))\nonumber\\
&&-Q(u(f_{1})-u(g_{1}))+Q(u(f_{1}))>0.
\end{eqnarray}
By comparing (\ref{req13a}) and (\ref{eqf16}), we have
\begin{eqnarray}
&&\Psi(p, \lambda, p_{u}, f_{1}, g_{1})-\Phi(p, \lambda, f_{1}, g_{1})\nonumber\\
&=&p\lambda Q(v_{1}(p_{fu})u(f_{1}))-p\lambda Q(u(f_{1}))\nonumber\\
&&+p^{2}\lambda[Q(v_{1}(p_{fu})u(f_{1})-u(g_{1}))-Q(v_{1}(p_{fu})u(f_{1}))\nonumber\\
&&-Q(u(f_{1})-u(g_{1}))+Q(u(f_{1}))]\nonumber\\
&<&p\lambda Q(v_{1}(p_{fu})u(f_{1}))-p\lambda Q(u(f_{1}))\nonumber\\
&&+p\lambda[Q(v_{1}(p_{fu})u(f_{1})-u(g_{1}))-Q(v_{1}(p_{fu})u(f_{1}))\nonumber\\
&&-Q(u(f_{1})-u(g_{1}))+Q(u(f_{1}))]\nonumber\\
&=&-p\lambda\left[Q(u(g_{1})-v_{1}(p_{fu})u(f_{1}))+Q(u(f_{1})-u(g_{1}))\right]<0.
\end{eqnarray}
This means that with the decreasing $p,$ the value of $\Psi(p, \lambda, p_{fu}, f_{1}, g_{1})$ is decreasing and less than $\Phi(p, \lambda, f_{1}, g_{1}).$ In addition, for the particular case with $p\rightarrow0$ and $p_{fu}\rightarrow1,$ it gives $v_{1}(p_{fu})\rightarrow0$ and
\begin{equation}
\Psi(p, \lambda, p_{fu}, f_{1}, g_{1})\rightarrow-pQ(u(g_{1}))<0.
\end{equation}
This implies that there exists a sufficiently small $p>0$ such that
$$\Psi(p, \lambda, p_{fu}, f_{1}, g_{1})<0,$$
 and the preference relation $\bar{f}\prec\bar{g}$ holds.

(II) According to the condition of $f\prec g$ for $f_{1}<g_{1}<0,$ one has $\Phi(p, \lambda, f_{1}, g_{1})<0$ for $p<\bar{p}$ with $\Phi(\bar{p}, \lambda, f_{1}, g_{1})=0.$
The application of $u(g_{1})<v_{1}(p_{fu})u(f_{1})<0$ and $f_{1}<g_{1}<0$ yields the following relations:
\begin{eqnarray}
&&Q(v_{1}(p_{fu})u(f_{1})-u(g_{1}))<Q(v_{1}(p_{fu})u(f_{1}))-Q(u(g_{1})),\nonumber\\
&&Q(u(f_{1})-u(g_{1}))>Q(u(f_{1}))-Q(u(g_{1})).\nonumber
\end{eqnarray}
Then it follows:
 \begin{eqnarray}
&&Q(v_{1}(p_{fu})u(f_{1})-u(g_{1}))-Q(v_{1}(p_{fu})u(f_{1}))\nonumber\\
&&-Q(u(f_{1})-u(g_{1}))+Q(u(f_{1}))<0.
\end{eqnarray}
The difference between $\Psi(p, \lambda, p_{fu}, f_{1}, g_{1})$ and $\Phi(p, \lambda, f_{1}, g_{1})$ can be computed as follows:
\begin{eqnarray}
&&\Phi(p, \lambda, f_{1}, g_{1})-\Psi(p, \lambda, p_{fu}, f_{1}, g_{1})\nonumber\\
&=&p\lambda Q(u(f_{1}))-p\lambda Q(v_{1}(p_{fu})u(f_{1}))\nonumber\\
&&-p^{2}\lambda[Q(v_{1}(p_{fu})u(f_{1})-u(g_{1}))-Q(v_{1}(p_{fu})u(f_{1}))\nonumber\\
&&-Q(u(f_{1})-u(g_{1}))+Q(u(f_{1}))]\nonumber\\
&<&p\lambda Q(u(f_{1}))-p\lambda Q(v_{1}(p_{fu})u(f_{1}))\nonumber\\
&&-p\lambda[Q(v_{1}(p_{fu})u(f_{1})-u(g_{1}))-Q(v_{1}(p_{fu})u(f_{1}))\nonumber\\
&&-Q(u(f_{1})-u(g_{1}))+Q(u(f_{1}))]\nonumber\\
&=&p\lambda\left[Q(u(f_{1})-u(g_{1}))-Q(v_{1}(p_{fu})u(f_{1})-u(g_{1}))\right]<0.
\end{eqnarray}
This means that the value of $\Psi(p, \lambda, p_{fu}, f_{1}, g_{1})$ is increasing and larger than $\Phi(p, \lambda, f_{1}, g_{1})$ due to the existence of $v_{1}(p_{fu}).$
When $p\rightarrow0$ and $p_{fu}\rightarrow1,$ it gives $v_{1}(p_{fu})\rightarrow0$ and
\begin{equation}
\Psi(p, \lambda, p_{fu}, f_{1}, g_{1})\rightarrow-pQ(u(g_{1}))>0.
\end{equation}
This means that there is a sufficiently small $p>0$ such that $\Psi(p, \lambda, p_{fu}, f_{1}, g_{1})>0,$ and
the preference relation $\bar{f}\succ\bar{g}$ holds.

\end{proof}

For instance, we analyze the experimental observations in Table \ref{lab2a}. Let us assume that $f=(2500, 0.33; 0, 0.67),$ $\bar{f}=(2500, 0.33; unknown, 0.67)$
and $g=\bar{g}=(2400, 0.34; 0, 0.66).$ Then it gives $p=0.34,$  $\lambda=33/34$ and $p_{fu}=0.67;$
the experimental result shows $f\succ g.$ When the unknown outcome is introduced, we have $\bar{f}\prec \bar{g}$ for $p=0.34.$

(3) {\bf The reflection effect of the bilinear utility function}

 In the above analysis, the utility function $\bar{u}(x, p_{fu})=v_{1}(p_{fu})u(x)$ is monotonically increasing with respect to $x$ and monotonically decreasing with respect to $p_{fu}.$
Now we consider the simplest case that the utility function is bilinear, meaning that the utility function $\bar{u}(x, p_{fu})=v_{1}(p_{fu})u(x)$ is linear with respect $p_{fu}$ and $x$ such as $\bar{u}(x, p_{fu})=(1-p_{fu})x.$
Let $\bar{f}=(f_{1}, p_{f}; unknown, 1-p_{f})$ and $\bar{g}=(g_{1}, p_{g}; 0, 1- p_{g})$ be two independent prospects with $p\in(0,1]$ and $\lambda\in(0,1).$
According to (\ref{req11a}), we have
\begin{equation}
\bar{f}\succeq \bar{g}\Leftrightarrow p_{f}Q(p_{f}f_{1})-p_{g}Q(g_{1})+
p_{f}p_{g}\left[Q(p_{f}f_{1}-g_{1})-Q(p_{f}f_{1})+Q(g_{1})\right]\geq0.\label{eq24}
\end{equation}
Moreover, we suppose that $\bar{f}^{\prime}=(-f_{1}, p_{f}; unknown, 1-p_{f})$ and $\bar{g}^{\prime}=(-g_{1}, p_{g}; 0, 1-p_{g}).$ It follows:
\begin{equation}
\bar{f}^{\prime}\preceq \bar{g}^{\prime}\Leftrightarrow p_{f}Q(p_{f}f_{1})-p_{g}Q(g_{1})+
p_{f}p_{g}\left[Q(p_{f}f_{1}-g_{1})-Q(p_{f}f_{1})+Q(g_{1})\right]\geq0.\label{eq25}
\end{equation}
The relations in (\ref{eq24}) and (\ref{eq25}) show that
\begin{equation}
\bar{f}\succeq \bar{g}\Leftrightarrow\bar{f}^{\prime}\preceq \bar{g}^{\prime}.
\end{equation}
This means that the reflection effect can be observed under the bilinear assumption of the utility function similar to the finding in \cite{Loomes1982}.
As shown in Table \ref{lab2a}, the experimental evidence exhibits the reflection effect, for example, $(4000, 0.8; unknown, 0.2)=\bar{f}\prec\bar{g}=(3000, 1.00)$ and
$(-4000, 0.8; unknown, 0.2)=\bar{f}^{\prime}\succ\bar{g}^{\prime}=(-3000, 1.00).$

The above analysis mainly focuses on the case with $\bar{f}=(f_{1}, p_{f}; unknown, 1-p_{f})$ and $\bar{g}=(g_{1}, p_{g}; 0, 1- p_{g}).$ The other cases can be studied similarly when
$\bar{f}=(f_{1}, p_{f}; unknown, 1-p_{f})$ and $\bar{g}=(g_{1}, p_{g}; unknown, 1- p_{g});$ or $\bar{f}=(f_{1}, p_{f}; 0, 1-p_{f})$ and $\bar{g}=(g_{1}, p_{g}; unknown, 1- p_{g}).$
Moreover, one can see that here the choice between the two independent prospects is only investigated. It is clear that the choice among multiple prospects is worth to be investigated in the future.

\section{The behavioral foundation}

It is seen from the framework and implications of the modified regret theory that the experimental evidences can be predicted.
The preference foundation of the modified regret theory should be investigated similar to that of the typical regret theory \citep{Diecidue2017}. Here we first recall the behavioral foundation of regret theory provided by \cite{Diecidue2017}; then it is developed to establish the behavioral foundation of the modified regret theory under fear of the unknown.

\subsection{The behavioral foundation of the typical regret theory}

For convenience, the formal definition of the typical regret theory is recalled as follows \citep{Loomes1982,Diecidue2017}:
\begin{definition}\label{def1}
Regret theory holds when the relation (\ref{req3}) is satisfied for a strictly increasing continuous utility function $u,$ subjective probabilities $p_{i},$ and a strictly increasing, skew symmetric continuous regret function $Q.$
\end{definition}
It is further assumed that $\mathbb{R}^{n}$ stands for the set of all prospects represented by $f=(f_{1}, f_{2}, \cdots, f_{n})$ under a finite state space $S=\{s_{1}, s_{2}, \cdots, s_{n}\}.$
The symbol $\alpha_{i}f$ denotes the prospect $f$ where the outcome $f_{i}$ is replaced by $\alpha.$ When the probability of the occurrence of the state $s_{i}$ is considered, the corresponding prospect is expressed as $\alpha_{p_{i}}f.$ Then the axiomatic properties about the preference relation $\succeq$ are given as follows \citep{Diecidue2017}:
\begin{itemize}
  \item[A1.] Completeness: $f\succeq g$ or $g\succeq f$ for $\forall f, g\in \mathbb{R}^{n};$
  \item[A2.] Transitivity: $f\succeq g, g\succeq h\Rightarrow f\succeq h$ for $f,g,h\in\mathbb{R}^{n};$
  \item[A3.] Weak monotonicity: For $\forall f, g\in\mathbb{R}^{n},$ if $f_{i}\geq g_{i}$ for $i\in I,$ then $f\succeq g;$
  \item[A4.] Strong monotonicity: For $\forall f, g\in\mathbb{R}^{n},$ $f\succ g$ if $f_{i}\geq g_{i}$ for $i\in I,$ and there is a non-null state $s_{i}$ such that $f_{i}>g_{i}.$
  A null state $s_{i}$ means $\alpha_{i}f\sim \beta_{i}f$ for $\forall f\in \mathbb{R}^{n}$ and the outcomes $\alpha$ and $\beta.$
  \item[A5.] Continuity: The subsets of $\{f\in \mathbb{R}^{n}| f\succeq g\}$ and $\{f\in \mathbb{R}^{n}| f\preceq g\}$ are closed for each $g\in\mathbb{R}^{n}.$
\end{itemize}

Moreover, if there are two prospects $f$ and $g$ satisfying the following relations:
\begin{equation}
\alpha_{i}f\sim \beta_{i}g\quad and\quad \gamma_{i}f\sim \delta_{i} g,
\end{equation}
for a non-null state $s_{i}$ and the outcomes $\alpha, \beta, \gamma, \delta,$ it implies the preference relation $\sim_{t}$ with
$\alpha\beta\sim_{t}\gamma\delta.$
Therefore, the trade-off consistency is defined as follows \citep{Kobberling2004}:
\begin{definition}\label{def2}
For the prospects $f,g,x,y\in\mathbb{R}^{n},$ the trade-off consistency is satisfied when the conditions of $\alpha_{i}f\sim\beta_{i}g, \gamma_{i}f\sim\delta_{i}g$ and $\alpha_{i}x\sim\beta_{i}y$ imply $\gamma_{i}x\sim\delta_{i}y.$
\end{definition}
In addition, the notation $\succeq_{t}$ with $\alpha\beta\succeq_{t}\gamma\delta$ is defined if there
are two prospects $f,g$ and a non-null state $s_{i}$ such that
\begin{equation}
\alpha_{i}f\succeq_{t}\beta_{i}g,\quad and\quad \gamma_{i}f\preceq_{t}\delta_{i} g.
\end{equation}
The notation $\succ_{t}$ with $\alpha\beta\succ_{t}\gamma\delta$ means that at least  one of
$\alpha_{i}f\succeq_{t}\beta_{i}g$ and $\gamma_{i}f\preceq_{t}\delta_{i} g$ is strict.
Hence, the definition of the preference trade-off consistency is offered as follows \citep{Wakker1984}:
\begin{definition}\label{def3}
For the prospects $f,g,x,y\in\mathbb{R}^{n},$ the preference trade-off consistency holds when there do not exist four outcomes $\alpha,\beta,\gamma,\delta$ such that
$\alpha\beta\succeq_{t}\gamma\delta$ and $\gamma\delta\succ_{t}\alpha\beta.$
\end{definition}
The strict dominance with $f\succeq_{SD}g$ indicates that we have $f_{i}\geq g_{i}$ for $i\in I$ and $f_{i}>g_{i}$ for a non-null state $s_{i}.$
Then the definition of the d-transitivity is given as follows:
\begin{definition}\label{def4}
For $f,g,h\in\mathbb{R}^{n},$ the d-transitivity holds if the conditions of $f\succeq_{SD}g$ and $g\succeq h$ mean $f\succ h$ and the conditions
of $f\succeq g$ and $g\succeq_{SD} h$ imply $f\succ h.$
\end{definition}

At the end, the behavioral foundation of the typical regret theory is characterized by the following result \citep{Diecidue2017}:
\begin{theorem} \label{th1}
Regret theory holds if and only if the preference relation $\succeq$ satisfies the five properties with completeness,
d-transitivity, strong monotonicity, continuity and trade-off consistency.
\end{theorem}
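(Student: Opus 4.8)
The plan is to establish the equivalence by proving the two implications separately, with the forward (``only if'') direction a routine verification and the converse (``if'') direction carrying the real weight. For necessity I would assume the representation (\ref{req3}) holds for some strictly increasing continuous $u$, probabilities $p_i$, and a strictly increasing skew-symmetric continuous $Q$, writing $\Psi(f,g)=\sum_{i=1}^n p_i Q(u(f_i)-u(g_i))$, and check each property. Completeness is immediate since $\Psi(f,g)$ is a real number and $\Psi(g,f)=-\Psi(f,g)$ by skew-symmetry. Strong monotonicity follows from $Q(0)=0$ and strict monotonicity of $u,Q$: coordinatewise dominance makes every summand nonnegative, and a strict improvement at a non-null state makes $\Psi(f,g)>0$. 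Continuity of $u$ and $Q$ makes $\Psi$ jointly continuous, so the sets in A5 are closed. For d-transitivity I would exploit coordinatewise monotonicity: given $f\succeq_{SD}g$ and $g\succeq h$, comparing $\Psi(f,h)$ with $\Psi(g,h)$ term by term and using $u(f_i)\ge u(g_i)$ (strict at a non-null state with $p_i>0$) yields $\Psi(f,h)\ge\Psi(g,h)\ge 0$ with strict inequality, hence $f\succ h$; the other case is symmetric. Trade-off consistency follows by writing the indifferences $\alpha_i f\sim\beta_i g$ and $\gamma_i f\sim\delta_i g$ as equalities of $\Psi$ and cancelling the common off-$i$ terms, which forces $u(\alpha)-u(\beta)=u(\gamma)-u(\delta)$ and hence $\gamma_i x\sim\delta_i y$ whenever $\alpha_i x\sim\beta_i y$.

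The substantive direction is sufficiency, which I would carry out in two stages. First, the target form $\sum_i p_i Q(u(f_i)-u(g_i))\ge 0$ is a special case of a skew-symmetric additive (SSA) representation $\sum_i p_i \phi(f_i,g_i)\ge 0$ with $\phi$ skew-symmetric (cf.\ \cite{Fishburn1989}), so the first goal is to extract such a $\phi$ from completeness, continuity, strong monotonicity and d-transitivity. The conceptual key is that d-transitivity (Definition \ref{def4}), not full transitivity A2, is the right hypothesis: regret theory is deliberately allowed to produce intransitive cycles, so A2 must be dropped, and d-transitivity is exactly the weakening that still supports an additive-difference representation while tolerating those cycles. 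Second, with $\phi$ available I would invoke trade-off consistency (Definition \ref{def2}) in the spirit of the trade-off technique of \cite{Wakker1984} and \cite{Kobberling2004} to construct a continuous, strictly increasing choiceless utility $u$ and to show that $\phi(a,b)$ depends on $(a,b)$ only through the difference $u(a)-u(b)$, i.e.\ $\phi(a,b)=Q(u(a)-u(b))$ for some $Q$. Skew-symmetry of $\phi$ then forces $Q(-t)=-Q(t)$, strong monotonicity forces $Q$ strictly increasing, and continuity forces $Q$ continuous, delivering Definition \ref{def1}.

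I expect the main obstacle to lie in the sufficiency direction, in two coupled places. The first is obtaining the SSA representation without transitivity: classical additive conjoint measurement presupposes a transitive weak order whose numerical representation is made additive, and that route is unavailable here, so I would replace the order-representation step with an argument driven by d-transitivity and continuity, checking that d-transitivity rules out precisely the cycles obstructing the additive-difference form while preserving the intransitivity regret requires. The second is the separation $\phi(a,b)=Q(u(a)-u(b))$, where trade-off consistency must be shown to make the utility scale built on a single state coherent across all states, so that one $u$ and one $Q$ serve simultaneously. Of these I anticipate the transitivity-free SSA step to be the genuine difficulty, the extraction of $u$ and $Q$ by trade-off consistency being technically involved but conceptually standard.
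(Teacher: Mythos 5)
Your necessity half is sound and, in fact, slightly better than what the paper itself offers. Note first that the paper never proves Theorem~\ref{th1}: it is recalled verbatim from \cite{Diecidue2017}, and the only proof the paper supplies is for the modified Theorem~\ref{th2}, whose necessity half performs exactly the verifications you sketch and whose sufficiency half is itself deferred to the proof steps of \cite{Diecidue2017} (via the remark that $\bar{u}=v(p_{u})u$ is a positive linear transformation of $u$). Within that necessity half, your d-transitivity argument --- comparing $\Psi(f,h)$ with $\Psi(g,h)$ term by term using only strict monotonicity of $Q$, giving $\Psi(f,h)>\Psi(g,h)\geq 0$ --- is the correct one. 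The paper's corresponding step in the proof of Theorem~\ref{th2} instead invokes the superadditivity inequality $Q(a+b)\geq Q(a)+Q(b)$, which requires convexity of $Q$; convexity is not part of Definition~\ref{def1} (it encodes regret \emph{aversion}, a special case), and even for convex skew-symmetric $Q$ the inequality fails for mixed-sign arguments (e.g.\ $Q(x)=x^{3}$, $a=1$, $b=-0.9$), which do occur since $g\succeq h$ holds only in aggregate, not coordinatewise. Your completeness, strong monotonicity, continuity, and trade-off-consistency checks likewise match the paper's verifications for Theorem~\ref{th2} and are correct (for trade-off consistency one needs $p_{i}>0$ at the non-null state, which the representation guarantees).

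The genuine gap is the sufficiency direction, and it is a gap of execution rather than of architecture. Your two-stage plan --- first extract a skew-symmetric additive representation $\sum_{i}p_{i}\phi(f_{i},g_{i})$ from completeness, d-transitivity, strong monotonicity and continuity, then use trade-off consistency in the spirit of \cite{Wakker1984} and \cite{Kobberling2004} to factor $\phi(a,b)=Q(u(a)-u(b))$ --- correctly identifies the structure of the actual proof in \cite{Diecidue2017}. But your treatment of stage one consists of announcing that d-transitivity ``is exactly the weakening that still supports an additive-difference representation'' while tolerating intransitive cycles; no argument is given, and this transitivity-free representation step (constructing the state-additive functional, deriving the subjective probabilities $p_{i}$ including the treatment of null states, and showing the state weights and the single pair $(u,Q)$ are consistent across all states) is the entire mathematical content of the theorem. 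As it stands the proposal proves the routine direction and provides a faithful roadmap --- but not a proof --- of the hard direction; since the paper also omits that direction by citation, your attempt reproduces (and on the d-transitivity point improves) everything the paper actually argues, while the substantive half remains, for both, outsourced to \cite{Diecidue2017}.
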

The finding in Theorem \ref{th1} shows the sufficient and necessary conditions of the regret theory. It has been pointed out in  \cite{Diecidue2017} that the properties of the preference relation $\succeq$ have a small difference as compared to those of the utility theory. The key issue is that the transitivity in the utility theory is replaced by the d-transitivity in Theorem \ref{th1}.

\subsection{The utility-based behavioral foundation of the modified regret theory}

In the following, let us construct the behavioral foundation of the modified regret theory by considering the fear of the unknown. Without loss of generality, we assume that $v_{1}(x)=v_{2}(x)=\cdots=v_{n}(x)=v(x).$ Then the formal definition of the modified regret theory is given as follows:
\begin{definition}\label{def5}
Regret theory under fear of the unknown holds when the relation (\ref{req11a}) is satisfied for a strictly increasing continuous utility function $u: \mathbb{R}\mapsto\mathbb{R},$ a strictly decreasing continuous fear function $v: [0, 1]\mapsto[0,1]$ subject to $v(0)=1$ and $v(1)=0,$ subjective probabilities $p_{i},$ and a strictly increasing, skew symmetric continuous regret function $Q: \mathbb{R}\mapsto\mathbb{R}.$
\end{definition}
Comparing with Definition \ref{def1}, the main difference in Definition \ref{def5} is that there exists a fear function $v(x)$ such that the effect of an unknown outcome on the utility can be considered.
Due to the existence of an unknown outcome, the notations should be rewritten in order to construct the behavioral foundation.
For a finite state space $S=\{s_{1}, s_{2}, \cdots, s_{n}\},$ the state-contingent prospect $f$ is expressed as $f=(f_{1}, p_{1}; f_{2}, p_{2}; \cdots; f_{n}, p_{n})$ and simplified as
$f=(f_{1}, f_{2}, \cdots, f_{n}).$
When there is an unknown outcome such as $f_{m},$ the set of the prospects is written as $\mathbb{R}^{n}_{u}.$
It is noted that the utility function $u(\cdot)$ is strictly increasing continuous with $u: \mathbb{R}\mapsto\mathbb{R}.$ Obviously, the value of the utility
for an unknown outcome is not defined in the existing works. Here we consider that the unknown outcome should lead to the unknown psychological experience of pleasure.
The utility value of an unknown outcome $f_{m}$ is determined as $u(f_{m})=0,$ which can be understood from the mean value of utility about the positive and negative outcomes.
Moreover, the effects of an unknown outcome on the utility values of the known outcomes should be considered. The sum of the probabilities of all unknown outcomes in a prospect $f$ is written as $p_{u}$ and the fear function is expressed as $v(p_{u}).$ Therefore, the final utility of an outcome $f_{i}$ $(i\in I)$ can be obtained as
\begin{equation}
\bar{u}(f_{i})=v(p_{u})u(f_{i}),\quad i\in I.
\end{equation}
Hence, the axiomatic properties about the preference relation $\succeq$ are adjusted as follows:
\begin{itemize}
  \item[A$^\prime$1.] Completeness: $\bar{f}\succeq \bar{g}$ or $\bar{g}\succeq \bar{f}$ for $\forall \bar{f}, \bar{g}\in \mathbb{R}^{n}_{u};$
  \item[A$^\prime$2.] Transitivity: $\bar{f}\succeq \bar{g}, \bar{g}\succeq \bar{h}\Rightarrow \bar{f}\succeq \bar{h}$ for $\forall \bar{f}, \bar{g}, \bar{h}\in\mathbb{R}^{n}_{u};$
  \item[A$^\prime$3.] Weak monotonicity: For $\forall \bar{f}, \bar{g}\in\mathbb{R}^{n}_{u},$ if $\bar{u}(f_{i})\geq \bar{u}(g_{i})$ for $i\in I,$ then $\bar{f}\succeq \bar{g};$
  \item[A$^\prime$4.] Strong monotonicity: For $\forall \bar{f}, \bar{g}\in\mathbb{R}^{n}_{u},$ $\bar{f}\succ \bar{g}$ if $\bar{u}(f_{i})\geq \bar{u}(g_{i})$ for $i\in I,$ and there is a non-null state $s_{i}$ such that $\bar{u}(f_{i})>\bar{u}(g_{i}).$
  \item[A$^\prime$5.] Continuity: The subsets of $\{\bar{f}\in \mathbb{R}^{n}_{u}| \bar{f}\succeq \bar{g}\}$ and $\{\bar{f}\in \mathbb{R}^{n}_{u}| \bar{f}\preceq \bar{g}\}$ are closed for each $\bar{g}\in\mathbb{R}^{n}_{u}.$
\end{itemize}
It is seen that the preference relations $\bar{f}\succeq \bar{g}$ and $\bar{f}\succ \bar{g}$ in A$^\prime$3 and A$^\prime$4 are based on the utility values of $\bar{u}(f_{i})$ and $\bar{u}(g_{i})$ for $\forall i\in I.$ As shown in A1-A5, the preference relations are constructed by using the values of the outcomes. Clearly, the axiomatic properties in A$^\prime$1-A$^\prime$5 are in agreement with the utility theory \citep{Neumann1953,Fishburn1970}.

In addition, similar to Definitions \ref{def2} and \ref{def3}, the trade-off consistency and preference trade-off consistency can be defined on $\mathbb{R}^{n}_{u}.$ For example, we have
the following definition:
\begin{definition}\label{def2a}
The trade-off consistency holds for the prospects $\bar{f}, \bar{g}, \bar{x}, \bar{y}\in\mathbb{R}^{n}_{u},$ if the application of $\alpha_{i}\bar{f}\sim\beta_{i}\bar{g}, \gamma_{i}\bar{f}\sim\delta_{i}\bar{g}$ and $\alpha_{i}\bar{x}\sim\beta_{i}\bar{y}$ leads to $\gamma_{i}\bar{x}\sim\delta_{i}\bar{y}.$
\end{definition}
For giving the definition of the d-transitivity, we redefine the strict dominance $\succeq_{SD}$ as follows:
\begin{definition}\label{def6}
The relation of $\bar{f}\succeq_{SD}\bar{g}$ is satisfied if $\bar{u}(f_{i})\geq\bar{u}(g_{i})$ for all $i,$ and there is a non-null state $s_{i}$ such that $\bar{u}(f_{i})>\bar{u}(g_{i}).$
\end{definition}
Based on Definition \ref{def6}, the d-transitivity can be given similar to Definition \ref{def4}:
\begin{definition}\label{def4a}
The d-transitivity is satisfied for $\bar{f}, \bar{g}, \bar{h}\in\mathbb{R}^{n}_{u},$ if $\bar{f}\succeq_{SD}\bar{g}$ and $\bar{g}\succeq \bar{h}$ mean $\bar{f}\succ \bar{h};$ and
 $\bar{f}\succeq \bar{g}$ and $\bar{g}\succeq_{SD} \bar{h}$ yield $\bar{f}\succ \bar{h}.$
\end{definition}
As shown in the above definitions, here the main difference to Definitions \ref{def2}-\ref{def4} is that the relations of $f_{i}\geq g_{i}$ and $f_{i}>g_{i}$ are replaced by $\bar{u}(f_{i})\geq \bar{u}(g_{i})$ and $\bar{u}(f_{i})>\bar{u}(g_{i}),$ respectively.
When all the outcomes in $\bar{f}$ and $\bar{g}$ are known, the relation of $\bar{u}(f_{i})\geq \bar{u}(g_{i})$ can degenerate to $f_{i}\geq g_{i}$ by considering the
strictly increasing property of the utility function $u(\cdot).$ In fact, when we compare two prospects $\bar{f}$ and $\bar{g},$ the utility values play the key role \citep{Neumann1953,Fishburn1970}.
Therefore, it is natural to replace $f_{i}\geq g_{i}$ and $f_{i}>g_{i}$ by the conditions of $\bar{u}(f_{i})\geq \bar{u}(g_{i})$ and $\bar{u}(f_{i})>\bar{u}(g_{i}),$ respectively.

Based on the above axiomatic properties of the preference relation $\succeq,$ the representation theorem for the modified regret theory is given as follows:
\begin{theorem} \label{th2}
Regret theory under fear of the unknown holds if and only if the preference relation $\succeq$ in (\ref{req11a}) satisfies the five modified properties of completeness,
d-transitivity, strong monotonicity, continuity and trade-off consistency.
\end{theorem}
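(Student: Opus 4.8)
The plan is to reduce Theorem \ref{th2} to Theorem \ref{th1} by treating the fear-adjusted effective utility $\bar{u}(f_{i})=v(p_{u})u(f_{i})$ as the object that plays the role the raw outcomes $f_{i}$ played in the typical theory. The pivotal observation is that the modified choice rule (\ref{req11a}) is obtained from the typical rule (\ref{req3}) by the literal substitution $u(f_{i})\mapsto\bar{u}(f_{i})$, and that, in the same way, the comparison-based properties A$^{\prime}$3, A$^{\prime}$4, the modified strict dominance (Definition \ref{def6}) and the modified d-transitivity (Definition \ref{def4a}) are obtained from A3, A4 and Definition \ref{def4} by replacing every outcome comparison $f_{i}\geq g_{i}$ (resp. $f_{i}>g_{i}$) with the effective-utility comparison $\bar{u}(f_{i})\geq\bar{u}(g_{i})$ (resp. $\bar{u}(f_{i})>\bar{u}(g_{i})$). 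Thus the axiomatic scaffold is an isomorphic copy of the typical one once outcomes are read as effective utilities, and the argument of Theorem \ref{th1} can in principle be carried through with this relabelling.

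Concretely, I would introduce the map $T:\mathbb{R}^{n}_{u}\to\mathbb{R}^{n}$ sending $\bar{f}=(f_{1},\dots,f_{n})$ to its effective-utility vector $T(\bar{f})=(\bar{u}(f_{1}),\dots,\bar{u}(f_{n}))$, where $\bar{u}=0$ on every unknown coordinate. Pushing $\succeq$ forward along $T$ yields a relation $\succeq^{\prime}$ on $T(\mathbb{R}^{n}_{u})$, and by the correspondence above $\succeq$ satisfies the modified comparison-based properties precisely when $\succeq^{\prime}$ satisfies completeness, d-transitivity, strong monotonicity and continuity in the sense of Section 4.1. For the \emph{only if} direction I would assume (\ref{req11a}) holds for some $u,v,Q$ and verify the five modified properties; this is the same mechanical check as in the typical theory applied to $\succeq^{\prime}$, with completeness and continuity following from continuity of $Q$, strong monotonicity and d-transitivity using that $Q$ is strictly increasing and skew symmetric so that $Q(\xi)$ and $Q(-\xi)$ have opposite signs, and trade-off consistency following from the additive separability of the representation across states. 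The boundary data $v(0)=1$, $v(1)=0$ and the convention $\bar{u}(\text{unknown})=0$ are immediately seen to be consistent with these checks.

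For the \emph{if} direction I would assume the five modified properties and apply Theorem \ref{th1} to $\succeq^{\prime}$, obtaining a strictly increasing continuous utility $\tilde{u}$, subjective probabilities $p_{i}$, and a skew-symmetric strictly increasing continuous regret function $Q$ that represent $\succeq^{\prime}$ via (\ref{req3}). Composing with the structural decomposition $\bar{u}(f_{i})=v(p_{u})u(f_{i})$ then recovers (\ref{req11a}); the endpoint constraints $v(0)=1$ and $v(1)=0$ fix $v$ at the boundary, while strict decrease of $v$ in $p_{u}$ is forced by strong monotonicity together with the normalization $\bar{u}(\text{unknown})=0$.

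The genuinely delicate step, and the one where the real work lies, is that the effective utility is \emph{prospect-dependent}: the same outcome $f_{i}$ is scaled by $v(p_{fu})$ inside $\bar{f}$ but by $v(p_{gu})$ inside $\bar{g}$, and a single-coordinate replacement of the form $\alpha_{i}\bar{f}$ — which the trade-off consistency of Definition \ref{def2a}, strong monotonicity and strict dominance all invoke — can change $p_{u}$ and hence rescale \emph{every other} coordinate simultaneously. I would therefore have to show that $T$ is well defined, that the outcome-level replacements used in Definition \ref{def2a} commute with $T$ up to this global rescaling, and that the pushforward $\succeq^{\prime}$ is a legitimate preference on a domain rich enough for Theorem \ref{th1} to apply; and, in the \emph{if} direction, that $u$ and $v$ can be disentangled from the single composite $\bar{u}$ using only the two boundary constraints on $v$. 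This consistency-and-separation step, rather than any of the routine axiom verifications, is the main obstacle.
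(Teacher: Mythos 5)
Your proposal takes essentially the same route as the paper: the forward direction is the same mechanical verification of the five modified axioms from the representation (\ref{req11a}) (your d-transitivity check via strict monotonicity of $Q$ is in fact cleaner than the paper's, which needlessly invokes convexity of $Q$), while the converse is the same reduction to Theorem \ref{th1}, reading the effective utilities $\bar{u}(f_{i})=v(p_{u})u(f_{i})$ as the outcomes and invoking the construction of \cite{Diecidue2017}, which the paper justifies only by remarking that for fixed $p_{u}$ the map $\bar{u}=v(p_{u})u$ is a positive linear transformation of $u$. The ``delicate step'' you flag---prospect-dependence of the scaling $v(p_{u})$, so that a single-coordinate replacement such as $\alpha_{i}\bar{f}$ can change $p_{u}$ and thereby rescale every other coordinate---is precisely the point the paper's own proof passes over without resolving, so your attempt is faithful to, and no less complete than, the published argument.
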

\begin{proof}
On the one hand, when regret theory under fear of the unknown holds, we derive the five modified properties of the preference relation $\succeq$ in (\ref{req11a}).
\begin{enumerate}
  \item[(1)] Completeness: By considering any two prospects $\bar{f}=(f_{1}, f_{2}, \cdots, f_{n})$ and $\bar{g}=(g_{1}, g_{2}, \cdots, g_{n})$ in $\mathbb{R}^{n}_{u},$
  when the modified regret theory is satisfied, the values of $\Psi=\sum_{i=1}^{n}p_{i}Q(\bar{u}(f_{i})-\bar{u}(g_{i}))$ are used to evaluate the importance of $\bar{f}$ and $\bar{g}.$
  The relations of $\Psi>0, \Psi=0$ and $\Psi<0$ mean $\bar{f}\succ\bar{g}, \bar{f}\sim\bar{g}$ and $\bar{f}\prec\bar{g},$ respectively. This implies that the completeness of the preference relation $\succeq$ is satisfied.
  \item[(2)] D-transitivity: Assume that three prospects in $\mathbb{R}^{n}_{u}$ are expressed as $\bar{f}=(f_{1}, f_{2}, \cdots, f_{n}),$ $\bar{g}=(g_{1}, g_{2}, \cdots, g_{n})$ and $\bar{h}=(h_{1}, h_{2}, \cdots, h_{n}).$ When $\bar{f}\succeq_{SD}\bar{g},$ one has $\bar{u}(f_{i})\geq\bar{u}(g_{i})$ for all $i,$ and there is a non-null state $s_{i}$ such that $\bar{u}(f_{i})>\bar{u}(g_{i}).$ Since the function $Q(\cdot)$ is strictly increasing, we arrive at $\Psi_{1}=\sum_{i=1}^{n}p_{i}Q(\bar{u}(f_{i})-\bar{u}(g_{i}))>0.$ When $\bar{g}\succeq \bar{h},$ it follows $\Psi_{2}=\sum_{i=1}^{n}p_{i}Q(\bar{u}(g_{i})-\bar{u}(h_{i}))\geq0.$
      Therefore, we have
      \begin{eqnarray}
      \Psi_{3}&=&\sum_{i=1}^{n}p_{i}Q(\bar{u}(f_{i})-\bar{u}(h_{i}))\nonumber\\
      &=&\sum_{i=1}^{n}p_{i}Q(\bar{u}(f_{i})-\bar{u}(g_{i})+\bar{u}(g_{i})-\bar{u}(h_{i}))\nonumber\\
      &\geq&\sum_{i=1}^{n}p_{i}Q(\bar{u}(f_{i})-\bar{u}(g_{i}))+\sum_{i=1}^{n}p_{i}Q(\bar{u}(g_{i})-\bar{u}(h_{i}))>0,\nonumber
      \end{eqnarray}
      then $\bar{f}\succ \bar{h},$ where the convexity of $Q(\cdot)$ has been used.  Similarly, it gives $\bar{f}\succ \bar{h}$ by applying $\bar{f}\succeq \bar{g}$ and $\bar{g}\succeq_{SD} \bar{h}.$ This means that the d-transitivity is satisfied for the preference relation $\succeq.$
  \item[(3)]Strong monotonicity: For any two prospects $\bar{f}=(f_{1}, f_{2}, \cdots, f_{n})$ and $\bar{g}=(g_{1}, g_{2}, \cdots, g_{n})$ in $\mathbb{R}^{n}_{u},$ if $\bar{u}(f_{i})\geq \bar{u}(g_{i})$ for all $i,$ and $\exists s_{i}$ such that $\bar{u}(f_{i})>\bar{u}(g_{i}),$ $\Psi=\sum_{i=1}^{n}p_{i}Q(\bar{u}(f_{i})-\bar{u}(g_{i}))>0.$ This implies $\bar{f}\succ\bar{g}$ due to the regret theory under fear of the unknown.
  \item[(4)] Continuity: For $\forall \bar{f}, \bar{g}\in\mathbb{R}^{n}_{u},$ the continuity of $v(\cdot)$ and $u(\cdot)$ yields the continuity of $\bar{u}(\cdot).$ Then using the continuity of $Q(\cdot),$ the subsets of $\{\bar{f}\in \mathbb{R}^{n}_{u}| \bar{f}\succeq \bar{g}\}$ and $\{\bar{f}\in \mathbb{R}^{n}_{u}| \bar{f}\preceq \bar{g}\}$ are closed.
  \item[(5)] Trade-off consistency. Let four prospects $\bar{f}, \bar{g}, \bar{x}$ and $\bar{y}$ in $\mathbb{R}^{n}_{u}$ be written as $\bar{f}=(f_{1}, f_{2}, \cdots, f_{n}),$ $\bar{g}=(g_{1}, g_{2}, \cdots, g_{n}),$  $\bar{x}=(x_{1}, x_{2}, \cdots, x_{n})$ and $\bar{y}=(y_{1}, y_{2}, \cdots, y_{n}).$
  Under the modified regret theory with the outcomes $\alpha, \beta, \gamma$ and $\delta,$ the conditions of $\alpha_{i}\bar{f}\sim\beta_{i}\bar{g}$ and $\gamma_{i}\bar{f}\sim\delta_{i}\bar{g}$ lead to the following equalities:
  \begin{eqnarray}
  &&p_{i}Q(\bar{u}(\alpha)-\bar{u}(\beta))=-\sum_{k=1,k\neq i}^{n}Q(\bar{u}(f_{k})-\bar{u}(g_{k})),\nonumber\\
  &&p_{i}Q(\bar{u}(\gamma)-\bar{u}(\delta))=-\sum_{k=1,k\neq i}^{n}Q(\bar{u}(f_{k})-\bar{u}(g_{k})),\nonumber
  \end{eqnarray}
  This implies $\bar{u}(\alpha)-\bar{u}(\beta)=\bar{u}(\gamma)-\bar{u}(\delta)$ due to the strictly increasing property of $Q(\cdot).$
  Moreover, applying $\alpha_{i}\bar{x}\sim\beta_{i}\bar{y},$ we have
  \begin{equation}
  p_{i}Q(\bar{u}(\alpha)-\bar{u}(\beta))=-\sum_{k=1,k\neq i}^{n}Q(\bar{u}(x_{k})-\bar{u}(y_{k})).\nonumber
  \end{equation}
  Then it follows:
  \begin{equation}
  p_{i}Q(\bar{u}(\gamma)-\bar{u}(\delta))=-\sum_{k=1,k\neq i}^{n}Q(\bar{u}(x_{k})-\bar{u}(y_{k})).\nonumber
  \end{equation}
  This means $\gamma_{i}\bar{x}\sim\delta_{i}\bar{y}$ and the trade-off consistency is satisfied according to Definition \ref{def2a}.
\end{enumerate}

On the other hand, let us assume that the properties of completeness,
d-transitivity, strong monotonicity, continuity and trade-off consistency are satisfied for the preference relation $\succeq.$ Then we can prove that the modified regret theory under fear of the unknown holds by virtue of Definition \ref{def5}. It is feasible to follow the proof steps of regret theory shown in  \cite{Diecidue2017}.
Comparing (\ref{req3}) and (\ref{req11a}), one can find that the only difference is the construction of the utility function. In the proof procedure of the representation theorem for the typical regret theory, the utility function in (\ref{req3}) can be replaced by using a linear transformation with $u^{\prime}=a\cdot u+c$ $(a>0, \forall c\in \mathbb{R})$ \citep{Diecidue2017}. The utility function in (\ref{req11a}) is a linear transformation of $u(x)$ and it is written as $\bar{u}(x)=v(p_{u})u(x)$ $(v(p_{u})>0)$ for a fixed $p_{u}\in[0,1).$
This means that the utility function $\bar{u}(x)$ in (\ref{req11a}) can be derived by using the existing procedure in \cite{Diecidue2017}. Thus the regret theory under fear of the unknown holds when the preference relation $\succeq$ satisfies the modified properties of completeness,
d-transitivity, strong monotonicity, continuity and trade-off consistency.
\end{proof}

As compared to the finding in \cite{Diecidue2017}, one of the differences is that the introduced factor $v(p_{u})$ in the utility function $\bar{u}(x)$ exhibits a practical meaning to characterize the effect of the unknown outcomes. The other differences are attributed to the properties where the conditions of $f_{i}\geq g_{i}$ and $f_{i}>g_{i}$ are replaced by $\bar{u}(f_{i})\geq \bar{u}(g_{i})$ and $\bar{u}(f_{i})>\bar{u}(g_{i}),$ respectively. The utility theory is naturally used to characterize the axiomatic properties of the preference relations.

\section{Application to the medical decision making with an unknown outcome}

\begin{table}[ptb]
\caption{The advantages and disadvantages of two treatments.}
\resizebox{\textwidth}{!}
	{
\begin{tabular}
[c]{c|c|c}
 \hline
Two treatments &Advantages&Disadvantages\\
\hline
Surgery &Fewer side effects and &Losing voice and \\
 &less chance of recurrence&living with artificial voice\\
\hline
Radiotherapy &Normal voice&High chance of recurrence and \\
 &&less than 3 years to live if recurring\\
\hline
\end{tabular}\label{tab:d1}
}
\end{table}

\begin{table}[ptb]
\caption{The utility values for different outcomes}
\resizebox{\textwidth}{!}
	{
\begin{tabular}
[c]{c|ccc}
 \hline
Outcomes &Full health with no disease &Normal voice&Artificial voice \\
\hline
Utility values &1&0.7&0.5\\
\hline
Outcomes &Normal voice+death&Artificial voice+death&Immediate death \\
\hline
Utility values &0.28&0.27&0\\
\hline
\end{tabular}\label{tab:d2}
}
\end{table}

In the following, let us investigate the medical decision making problem with an unknown outcome.
Assume that the DM is suffered from a stage T3 laryngeal cancer without metastases \citep{Wakker2010,Diecidue2017}.
It is difficult to choose one of the surgery and radiotherapy for the stage, since there is not a clear medical instruction about the preference of the two treatments.
The advantages and disadvantages of surgery and radiotherapy are given in Table \ref{tab:d1}.
The final treatment decision is dependent on the subjective preference of the patient.
A decision analyst is invited to help the DM choosing between radiotherapy and surgery.
Suppose that the DM is with regret averse with the regret function $Q(x)=x^{3}.$ The utility values of the DM for different outcomes are shown in Table \ref{tab:d2}.
As reported in \cite{Diecidue2017}, the two independent prospects are expressed as follows:
\begin{eqnarray}
f=Surgery=(0.5, 0.6; 0.27, 0.4),\quad
g=Radiotherapy=(0.7, 0.3; 0.28, 0.7),\nonumber
\end{eqnarray}
where the outcomes in the two prospects have been replaced by the corresponding utility values.
According to the method in \cite{Loomes1982}, the decision matrix can be obtained and given in Table \ref{tab:d3}.
Based on the formula of (\ref{req3}), we can calculate
\begin{eqnarray}
&&0.18\cdot(0.5-0.7)^{3}+0.42\cdot(0.5-0.28)^{3}\nonumber\\
&&+0.12\cdot(0.27-0.7)^{3}+0.28\cdot(0.27-0.28)^{3}=-0.0065<0.\nonumber
\end{eqnarray}
This means that $f\prec g$ according to the typical regret theory when the DM exhibits regret averse.
The obtained result is not in agreement with the finding in \cite{Diecidue2017} by considering the three states \citep{Wakker2010}: (1) both radiotherapy and surgery prevent recurrence with the probability 0.3, (2) only surgery prevents recurrence with the probability 0.3, and (3) both surgery and radiotherapy cannot prevent recurrence with the probability 0.4.
Moreover, it is seen that the observation of $f\prec g$ is in accordance with the rational decision based on the behavioral foundation \citep{Diecidue2017}.
The above phenomenon shows that the application of different state spaces could lead to different choices. Here one can see that the decision matrix method in \cite{Loomes1982} is compatible with the behavioral foundation in \cite{Diecidue2017}.

\begin{table}[ptb]
\caption{The decision matrix for the medical decision problem under regret theory.}
\tabcolsep 0.20in
\begin{tabular}
[c]{c|cccc}
 \hline
States &$s_{1}$&$s_{2}$&$s_{3}$&$s_{4}$\\
\hline
Probability&0.18 &0.42&0.12&0.28\\
\hline
Utilities for surgery &0.5&0.5&0.27&0.27\\
\hline
Utilities for radiotherapy &0.7&0.28&0.7&0.28\\
\hline
\end{tabular}\label{tab:d3}

\end{table}

On the other hand, it is noted that the unknown outcomes of surgery and radiotherapy are not considered in the above analysis. For example, there must exist a certain degree of surgical risk for a surgery. Some healthy cells could be killed under radiotherapy and some unknown results could exist. We analyze the following three cases:
\begin{itemize}
\item Case I: $\bar{f}=(0.5, 0.6-0.5p_{u}; 0.27, 0.4-0.5p_{u}; \text{unknown outcome}, p_{u})$ and $\bar{g}=(0.7, 0.3; 0.28, 0.7).$
\begin{figure}[t]
\begin{center}
\includegraphics[height=2.0in]{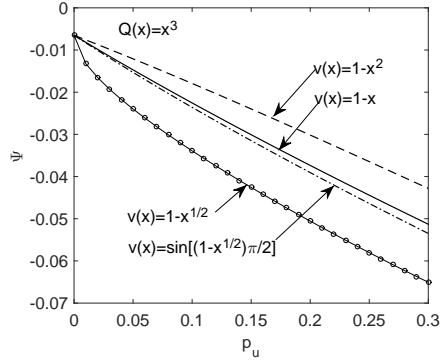}
\caption{Case I: The variations of $\Psi$ versus $p_{u}$ for the selected fear functions under the regret function $Q(x)=x^{3}.$} \label{fig1}
\end{center}
\end{figure}
First, let us consider that there is an unknown risk with the probability $p_{u}$ for the surgery treatment. The probabilities of the known outcomes are supposed to be revised as $0.6-0.5p_{u}$ and $0.4-0.5p_{u},$ respectively.
For a particular case of $p_{u}=0.1,$ the decision matrix should be adjusted and shown in Table \ref{tab:d4} by considering the independence of $\bar{f}$ and $\bar{g}.$
If we further assume that the fear function is linear with $v(x)=1-x,$ then the value of the function in (\ref{req11a}) can be computed as $\Psi=-0.0225<-0.0065.$ This means that the unknown risk enhances the choice of $\bar{f}\prec \bar{g}.$ This observation is in agreement with the common sense of people.
\begin{table}[ptb]
\caption{The decision matrix for the medical decision problem by considering an unknown outcome.}
\resizebox{\textwidth}{!}
	{
\begin{tabular}
[c]{c|cccccc}
 \hline
States &$s_{1}$&$s_{2}$&$s_{3}$&$s_{4}$&$s_{5}$&$s_{6}$\\
\hline
Probabilities&0.165 &0.385&0.105&0.245&0.03&0.07\\
\hline
Utilities for surgery &$0.5v(0.1)$&$0.5v(0.1)$&$0.27v(0.1)$&$0.27v(0.1)$&0&0\\
\hline
Utilities for radiotherapy &0.7&0.28&0.7&0.28&0.7&0.28\\
\hline
\end{tabular}\label{tab:d4}
}
\end{table}
Moreover, the fear function $v(x)$ may be nonlinear such as
$$
v(x)=1-x^{a},\quad v(x)=\sin\left[\frac{\pi}{2}(1-x^{a})\right],\quad a>0.
$$
under the conditions of $v(0)=1$ and $v(1)=0.$ Under the regret function $Q(x)=x^{3},$ the variations of $\Psi$ versus $p_{u}$ are shown in Figure \ref{fig1} by selecting several specific functions. It is seen from Figure \ref{fig1} that with the increasing of $p_{u},$ the values of $\Psi$ are decreasing. This means that the lager the probability of the unknown risk for surgery, the more the likelihood of choosing radiotherapy is. The above phenomenon corresponds to the experimental evidence of percentage increasing shown in Table \ref{lab2a}.

\item Case II: $\bar{f}=(0.5, 0.60; 0.27, 0.40)$ and $$\bar{g}=(0.7, 0.3-0.5p_{u}; 0.28, 7-0.5p_{u}; \text{unknown outcome}, p_{u}).$$
\begin{figure}[t]
\begin{center}
\includegraphics[height=2.0in]{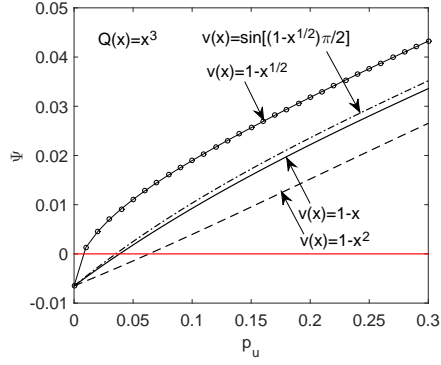}
\caption{Case II: The variations of $\Psi$ versus $p_{u}$ for the selected fear functions under the regret function $Q(x)=x^{3}.$} \label{fig2}
\end{center}
\end{figure}
Second, it is supposed that there is an unknown consequence with the probability $p_{u}$ for the radiotherapy treatment. The probabilities of the known outcomes are adjusted as $0.3-0.5p_{u}$ and $0.7-0.5p_{u},$ respectively. Similar to Table \ref{tab:d4}, the corresponding decision matrix is given in Table \ref{tab:d5} for $p_{u}=0.1.$
If choosing $v(x)=1-x,$ we have $\Psi=0.0092>0,$ meaning that $\bar{f}\succ \bar{g}.$
In other words, the existence of the unknown outcome could make the preference of the DM be reversal. Furthermore, it is interesting to consider the effects of different fear functions on the choice. Under the regret function $Q(x)=x^{3},$ the variations of $\Psi$ versus $p_{u}$ are drawn in Figure \ref{fig2} for the selected fear functions. With the increasing of $p_{u},$ the values of $\Psi$ are increasing. The observation shows that the percentage decreasing phenomenon occurs similar to that in the experimental evidence in Table \ref{lab2a}. In addition, it is found from Figure \ref{fig2} that there a critical point for preference reversal of the DM under each fear function. This implies that the preference reversal is dependent on the combined effect of the psychology of fearing the unknown and the probability of the occurrence of the unknown.

\begin{table}[ptb]
\caption{The decision matrix for the medical decision problem by considering an unknown outcome.}
\resizebox{\textwidth}{!}
	{
\begin{tabular}
[c]{c|cccccc}
 \hline
States &$s_{1}$&$s_{2}$&$s_{3}$&$s_{4}$&$s_{5}$&$s_{6}$\\
\hline
Probabilities&0.15 &0.39&0.10&0.26&0.06&0.04\\
\hline
Utilities for surgery &$0.5$&$0.5$&$0.27$&$0.27$&0.5&0.27\\
\hline
Utilities for radiotherapy &$0.7v(0.1)$&$0.28v(0.1)$&$0.7v(0.1)$&$0.28v(0.1)$&0&0\\
\hline
\end{tabular}\label{tab:d5}
}
\end{table}

\item Case III: We consider that there are unknown outcomes for the two treatments such as
$$\bar{f}=(0.5, 0.6-0.5p_{fu}; 0.27, 0.4-0.5p_{fu}; \text{unknown outcome}, p_{fu}),$$ and
$$\bar{g}=(0.7, 0.3-0.5p_{gu}; 0.28, 0.7-0.5p_{gu}; \text{unknown outcome}, p_{gu}).$$
\begin{table}[ptb]
\caption{The decision matrix for the medical decision problem by considering an unknown outcome.}
\resizebox{\textwidth}{!}
	{
\begin{tabular}
[c]{c|ccccccccc}
 \hline
States &$s_{1}$&$s_{2}$&$s_{3}$&$s_{4}$&$s_{5}$&$s_{6}$&$s_{7}$&$s_{8}$&$s_{9}$\\
\hline
Probabilities&0.1375 &0.3575&0.0875&0.2275&0.025&0.065&0.055&0.035&0.01\\
\hline
Utilities for surgery &$0.5v(0.1)$&$0.5v(0.1)$&$0.27v(0.1)$&$0.27v(0.1)$&0&0&$0.5v(0.1)$&$0.27v(0.1)$&0\\
\hline
Utilities for radiotherapy &$0.7v(0.1)$&$0.28v(0.1)$&$0.7v(0.1)$&$0.28v(0.1)$&$0.7v(0.1)$&$0.28v(0.1)$&0&0&0\\
\hline
\end{tabular}}\label{tab:d6}
\end{table}
\begin{figure}[t]
\begin{center}
\includegraphics[height=2.0in]{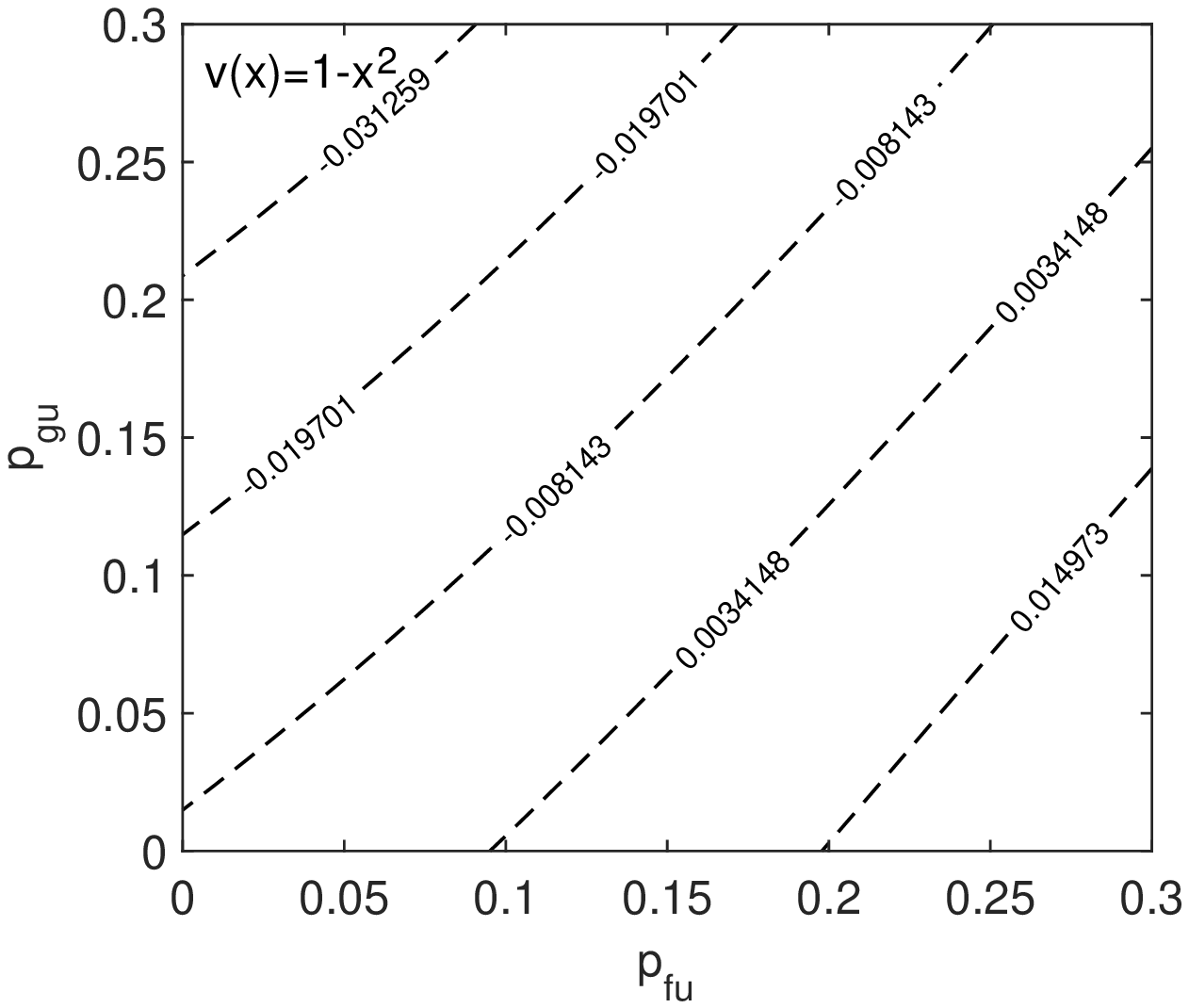}\\
\includegraphics[height=2.0in]{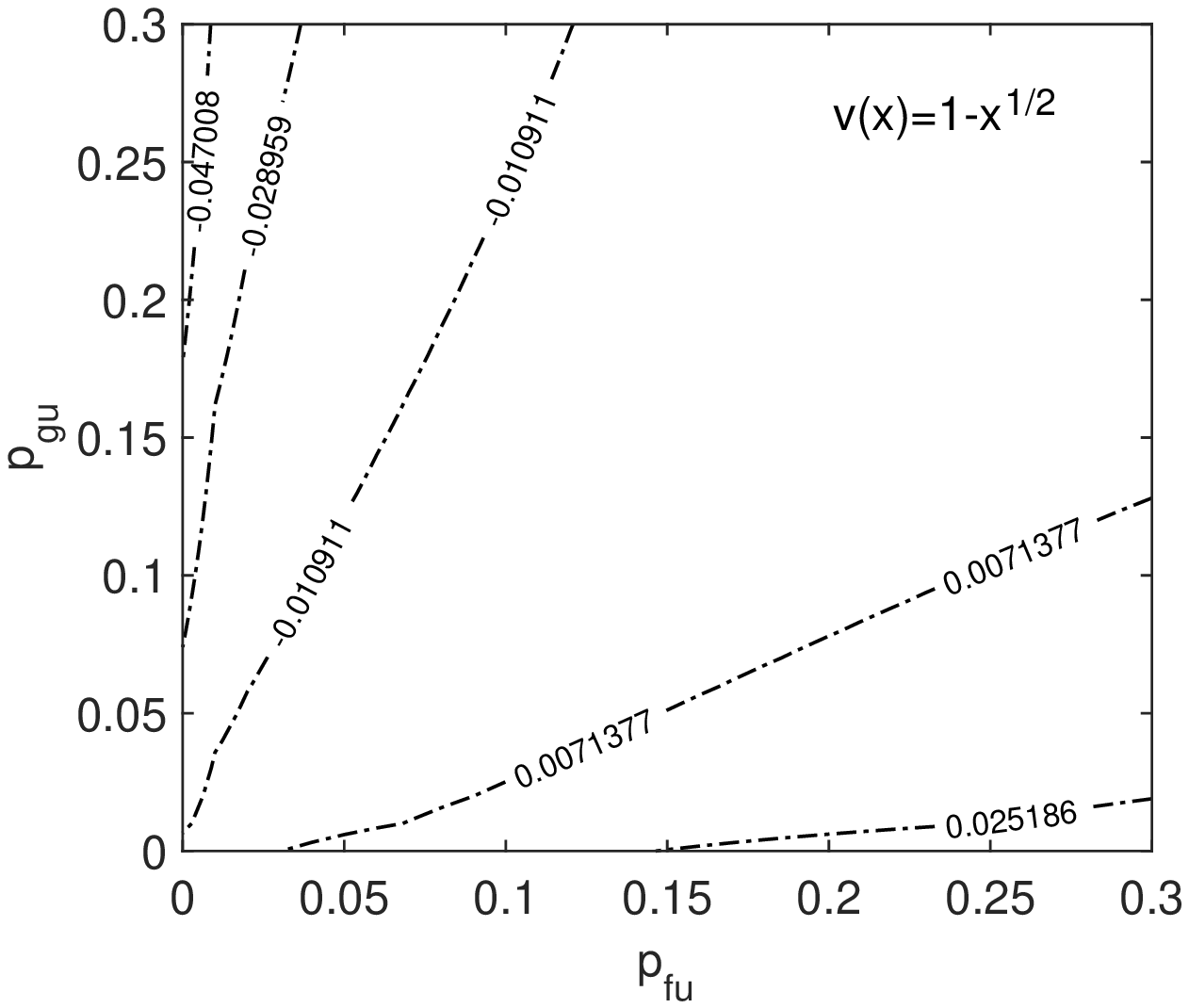}
\caption{Case III: The contours of $\Psi$ versus $p_{fu}$ and $p_{gu}$ for the selected fear functions under the regret function $Q(x)=x^{3}.$} \label{fig3}
\end{center}
\end{figure}
For example, if we choose $p_{fu}=p_{gu}=0.1,$ the decision matrix can be given in Table \ref{tab:d6}. After some computations, we arrive at $\Psi=-0.0049<0$ by using $v(x)=1-x,$ which yields $\bar{f}\prec \bar{g}.$ It is still interesting to consider the effects of the fear functions together with the probabilities $p_{fu}$ and $p_{gu}$ on the preference relation of the DM. For the sake of comparisons, we select $v(x)=1-x^{2}$ and $v(x)=1-x^{1/2}$ for numerical computations. The contour lines of $\Psi$ with respect to $p_{fu}$ and $p_{gu}$ are depicted in Figure \ref{fig3}. It is seen that the values of $\Psi$ are affected by the factors of $p_{fu},$ $p_{gu}$ and the fear function $v(x).$ This means that the preferences of the DM could be confused due to multiple unknowns.
\end{itemize}

Based on the above discussions, one can see that
the unknown outcomes may enhance, impede or reverse the existing preference relation. It is interesting to find that the theoretical analysis is in agreement with the experimental evidences.

\section{Conclusions}

The modified regret theory has been developed by considering the existence of an unknown outcome in a prospect. The experimental evidences have been revealed by analyzing the responses of 109 undergraduate students at Guangxi University to some hypothetical choice problems. The theoretical framework of the modified regret theory under fear of the unknown has been established and the facts derived from the developed theory have been given. Following the idea in \cite{Diecidue2017}, the utility-based behavioral foundation of the modified regret theory has been constructed. The medical decision problem has been reinvestigated by considering an unknown risk. It is found that the final choice of the DM is affected greatly by the fear functions and the probability of an unknown outcome. The observations show that the theoretical analysis is in accordance with the experimental evidences. The modified regret theory can be used to analyze the practical decision problems where an unknown outcome is existing.

\section*{Acknowledgements}

The work was supported by the National Natural Science Foundation of China (Nos. 71871072, 71571054), 2017 Guangxi high school innovation team and outstanding scholars plan, the Guangxi Natural Science Foundation for Distinguished Young Scholars (No. 2016GXNSFFA380004), and the Innovation Project of Guangxi Graduate Education (No. YCSW2021044).



\end{document}